\documentclass[reqno,12pt]{amsart}


\usepackage{amsmath,amssymb,mathtools,bbm,verbatim,enumitem,caption,float,appendix,kpfonts}
\usepackage{amsaddr}
\usepackage[T1]{fontenc}
\usepackage[backrefs]{amsrefs}
\usepackage[protrusion=true]{microtype}
\usepackage[english]{babel}
\usepackage[margin=1in]{geometry}
\usepackage[onehalfspacing]{setspace}
\usepackage[pdfusetitle,pagebackref,hypertexnames=false,bookmarks=false]{hyperref}
\numberwithin{equation}{section}							
\usepackage[nameinlink,noabbrev]{cleveref}
\expandafter\def\csname ver@etex.sty\endcsname{3000/12/31}

\usepackage{autonum}										


\allowdisplaybreaks[1]


\let\originalleft\left
\let\originalright\right
\renewcommand{\left}{\mathopen{}\mathclose\bgroup\originalleft}
\renewcommand{\right}{\aftergroup\egroup\originalright}

\makeatletter
\renewcommand*{\eqref}[1]{\hyperref[{#1}]{\textup{\tagform@{\ref*{#1}}}}}		
\makeatother


\newcommand\correspondingauthor[1]{%
  \begingroup
  \renewcommand\thefootnote{}\footnote{#1}%
  \addtocounter{footnote}{-1}%
  \endgroup
}


\newtheorem*{acknowledgment}{Acknowledgment}
\newtheorem*{data}{Data Availability}

\newtheorem{theorem}{Theorem}[section]
\newtheorem{proposition}[theorem]{Proposition}
\newtheorem{lemma}[theorem]{Lemma}
\newtheorem{corollary}[theorem]{Corollary}
\newtheorem{remark}[theorem]{Remark}
\newtheorem{definition}[theorem]{Definition}

\crefname{theorem}{Theorem}{Theorems}						
\creflabelformat{theorem}{#2{#1}#3}							
\crefname{main}{Main Theorem}{Main Theorems}				
\creflabelformat{main}{#2{#1}#3}							
\crefname{lemma}{Lemma}{Lemmas}								
\creflabelformat{lemma}{#2{#1}#3}							
\crefname{corollary}{Corollary}{Corollaries}				
\creflabelformat{corollary}{#2{#1}#3}						
\crefname{ineq}{inequality}{inequalities}					
\creflabelformat{ineq}{#2{\upshape(#1)}#3}					
\crefname{cond}{condition}{conditions}						
\creflabelformat{cond}{#2{#1}#3}							
\crefname{table}{Table}{Tables}								
\creflabelformat{table}{#2{\upshape#1}#3}					
\crefname{hypothesis}{Hypothesis}{Hypotheses}				
\creflabelformat{hypothesis}{#2{#1}#3}						
\crefname{remark}{Remark}{Remarks}							
\creflabelformat{remark}{#2{#1}#3}							


\def\Id{\mathbbm{1}}
\def\cx{\mathbb{C}}
\def\rl{\mathbb{R}}
\def\N{\mathbb{N}}

\def\Z{\mathbb{Z}}


\def\cA{\mathcal{A}}

\def\cC{\mathcal{C}}

\def\cI{\mathcal{I}}
\def\cJ{\mathcal{J}}

\def\cR{\mathcal{R}}
\def\cT{\mathcal{T}}


\def\ad{\mathrm{ad}}
\def\Ad{\mathrm{Ad}}

\def\coker{\mathrm{coker}}
\def\rd{\mathrm{d}}
\def\diag{\mathrm{diag}}

\def\dom{\mathrm{dom}}
\def\End{\mathrm{End}}

\def\rk{\mathrm{rk}}

\def\Res{\mathrm{Res}}

\def\SO{\mathrm{SO}}

\def\Spec{\mathrm{Spec}}

\def\SU{\mathrm{SU}}

\def\so{\mathfrak{so}}
\def\su{\mathfrak{su}}
\def\tr{\mathrm{tr}}
\def\rU{\mathrm{U}}



\title{Construction of Nahm data and BPS monopoles with continuous symmetries}
\date{\today}
\keywords{BPS monopoles, Nahm equation, Nahm transform, symmetric Ans\"atze}
\subjclass[2020]{35F50, 53C07, 70S15}

\author{Benoit Charbonneau}
\address[Benoit Charbonneau]{Department of Pure Mathematics\\ Department of Physics and Astronomy\\ University of Waterloo, Ontario, Canada}
\email{\href{mailto:benoit@alum.mit.edu}{benoit@alum.mit.edu}}

\author{Anuk Dayaprema}
\address[Anuk Dayaprema]{Duke University}
\email{\href{mailto:anukdayaprema@gmail.com}{anukdayaprema@gmail.com}}

\author{C. J. Lang}
\address[C. J. Lang]{Department of Pure Mathematics\\ University of Waterloo}
\email{\href{mailto:cjlang@uwaterloo.ca}{cjlang@uwaterloo.ca}}

\author{\'Akos Nagy$^\star$}
\address[\'Akos Nagy]{Department of Mathematics\\ University of California, Santa Barbara}

\author{Haoyang Yu}
\address[Haoyang Yu]{Duke University}
\email{\href{mailto:haoyang.yu079@duke.edu}{haoyang.yu079@duke.edu}}

\reversemarginpar
\calclayout
\pagestyle{plain}
\clubpenalty = 10000
\widowpenalty = 10000
\setlength{\footskip}{20pt}

\hypersetup{
	pdffitwindow	 = true,
	unicode			 = true,
	pdftoolbar		 = false,
	pdfmenubar		 = false,
	pdfstartview	 = {FitH},
	hypertexnames	 = false,
	colorlinks		 = true,
	linkcolor		 = black,
	citecolor		 = black,
	filecolor		 = black,
	urlcolor		 = black
}

\begin{document}

\correspondingauthor{$^\star$Corresponding author: \href{mailto:contact@akosnagy.com}{contact@akosnagy.com}}

\begin{abstract}
	We study solutions to Nahm's equations with continuous symmetries and, under certain (mild) hypotheses, we classify the corresponding Ans\"atze. Using our classification, we construct novel Nahm data, and prescribe methods for generating further solutions. Finally, we use these results to construct new BPS monopoles with spherical symmetry.
\end{abstract}

\maketitle

\section{Background and motivation}

When one studies a particular equation of interest and is confronted with the desire to prove the existence of solutions, the trick favored amongst all others is to impose symmetry to simplify the equation, in the hope that the reduced equation is more tractable. In gauge theory, this trick has manifested itself over and over again, and this paper is a new manifestation of it. 

\smallskip

Of all the gauge theory equations, one amongst those studied for the longest is the \emph{Bogomolny} equation expressing the relationship $F_\nabla = \ast \rd_\nabla \Phi$ between a connection $\nabla$ of curvature $F_\nabla$ over a vector bundle $E$, and an endomorphism $\Phi$ of $E$ called the \emph{Higgs field}. The study of symmetric monopoles on $\rl^3$ goes back to the very first attempts to produce monopoles back in 1974--1975, when some spherically symmetric monopoles were explicitly calculated in \cites{Prasad-Sommerfield-1975,tHooft-monopoles-1974,Polyakov-1974}. The study of axial symmetry came a few years later with \cites{Ward-2-monopole,Houston-Raiferartaigh-1981-axial-symmetric-monopole}. Since then, spherical symmetry was explored further in \cites{Bowman-onWeinberg-using-ADHMN,Dancer-NahmHyperKahler,Dancer-SU3monopoles,Prasad-1981-YMH-monopole-arbitrary-charge,Bais-Wilkinson-1979-spherical-symmetry-monopoles,Weinberg1982-continuous-family-monopole}, and cylindrical in \cites{Dancer-NahmHyperKahler,Dancer-SU3monopoles,MillerWeinberg2009-interactions-massless-monopole-clouds}. The bulk of the research on symmetric monopoles however concerned monopoles with discrete group of symmetries in \cites{BradenDAvanzoEnolski-Cyclic3-Monopoles,Braden-CyclicMonopoles-Toda-spectral,Braden-Enolski-TetrahedrallySymmetricMon,Houghton-Manton-Romao-constraintsBPS,Sutcliffe-CyclicMonopoles,Houghton-Sutcliffe-PlatonicMonopoles,HoughtonSutcliffe-Tetrahedral-cubic-monopoles,HoughtonSutcliffe-Octahedral-dodecahedral-monopoles,Sutcliffe-SW-monopoleSpectral-TodaSolitons,Sutcliffe-SymmetricMonopoles,HitchinMantonMurray-SymmetricMonopoles,Sutcliffe-moduli-tetrahedrallySymmetric-4monopoles,ORaifeartaigh-Rouhani-axiallySymmetricMonopoles}. Everything cited so far takes place on the Euclidean space, but symmetries should be useful wherever there are some, and have been exploited for studying monopoles on the hyperbolic 3-space in \cites{Manton-Sutcliffe-PlatonicHyperbolicMonopoles,CocSymmHbolicMpoles,NorRomSpectCurvesHbolicMpoles,Sutcliffe-spectral-curves-hyperbolic-from-ADHM,BolognesiCockburnSutcliffe-hyperbolicMonoples-JNRdata}, and for studying monopole chains, that is monopoles on $\rl^2 \times S^1$ in \cite{Harland2020-parabolicHiggs-cyclic-monopoles}.

\smallskip

Using symmetry is of course not just about proving existence. Because of the explicit nature of the solutions, it allows one to test other ideas. For instance, the recent work \cite{HarlandNogradi-AsymptoticTail} used the known examples to compute explicit tails of monopoles. And in \cite{HitchinMantonMurray-SymmetricMonopoles}, Hitchin, Manton, and Murray prove that the charge $k$ cyclic $\SU (2)$ monopoles form a geodesic submanifold of the total moduli space, hence this can be used to illustrate some of the monopole dynamics. 

\smallskip

The tools used to study monopoles are many, and include twistor spaces constructions, rational maps, spectral curves, Nahm transform; see \cite{Shnir-MagneticMonopoles} for a comprehensive book that also includes a lot of the physics. Our choice of tool is the Nahm transform and through it Nahm's equations. Nahm's equations were first introduced by Nahm in \cite{N82}. They form a system of ordinary differential equations on a triple of matrix-valued functions, defined on an open subset of $\rl$. Such a triple $\cT = \left( T_1, T_2, T_3 \right)$ satisfies Nahm's equations if
\begin{subequations}
\begin{align}
	\dot{T}_1	&= [T_2, T_3], \label{eq:Nahm1} \\
	\dot{T}_2	&= [T_3, T_1], \label{eq:Nahm2} \\
	\dot{T}_3	&= [T_1, T_2], \label{eq:Nahm3}
\end{align}
\end{subequations}
or, in a more compact form, involving the antisymmetric Levi-Civita $\epsilon$-tensor:
\begin{equation}
	\forall i \in \{ 1, 2, 3 \}: \quad \dot{T}_i = \frac{1}{2} \sum\limits_{j, k = 1}^3 \epsilon_{ijk} [T_j, T_k]. \label{eq:Nahmwith_LC}
\end{equation}
In practice, solving Nahm's equations is difficult as it is equivalent to a purely quadratic system of ordinary differential equations, that is a system of equations on functions $x_1 (t), \ldots, x_n (t)$ of the form
\begin{equation}
	\forall a \in \{ 1, \ldots, n \}: \quad \dot{x}_a = \sum\limits_{b, c = 1}^n C_{abc} x_b x_c,
\end{equation}
for some $C_{abc} \in \rl$. Such equations are not well-understood even in the $n = 2$ case. To circumvent these issues, we can use two things to our advantage: First of all, Nahm's equations can be formulated as a Lax pair, thus it has many conserved quantities. Second, we only search for axially and spherically symmetric solutions which further reduces the complexity of the system.

\smallskip

Nahm's equations corresponding to monopoles with maximal symmetry breaking have been understood for a long time, dating back to Nahm in \cite{N82} and then Hitchin for structure group $\SU(2)$ in \cites{H82,H83} and Hurtubise--Murray for arbitrary classical groups in \cite{HurMurMpoleConstClassGrps}. For arbitrary symmetry breaking, apart from low charge rank 2 minimal symmetry breaking work of Dancer \cite{Dancer-SU3monopoles} and the work of Houghton--Weinberg \cite{HoughtonWeinberg2002} extending it, nothing comprehensive about the Nahm transform and the behavior of the Nahm data corresponding to monopoles with arbitrary symmetry breaking was known prior to \cites{Charbonneau-VBAC,Charbonneau-CRM2017,Charbonneau-Nagy-NahmTransform}. The eigenvalues of the Higgs field at infinity correspond to singular points on the interval of definition of the solution to Nahm's equations, and at those singular points the $T_i$ have residues forming representations of $\su(2)$. Unlike in the maximal symmetry breaking case, where \cite{HurMurMpoleConstClassGrps} showed that as one approaches a singular point from the side of highest rank there are continuing and terminating components in the $T_i$ and only a continuing component from the other side, here terminating components can arise on both sides. In this paper, we do not explicitly consider the full possibilities of solutions to Nahm's equations where multiple adjoining intervals are permitted, and only produce explicit solutions on a single interval, corresponding to monopoles whose Higgs field has only two distinct eigenvalues at infinity. While our method could potentially produce spherically symmetric monopoles whose Higgs fields have more than two distinct eigenvalues at infinity, but finding such examples is beyond the scope of this paper. Thus, for our current purpose, it suffices to describe the behavior of the Nahm data at the two ends of this interval: the residues of the $T_i$ form a (possibly reducible) representation of $\su (3)$.

\smallskip

The main result of our paper, \Cref{theorem:Str_Thm_for_Spherical_Symmetry}, is a structure theorem for spherically symmetric solutions to Nahm's equations under certain reasonable conditions. Using the simplest version of this structure theorem, we provide in \Cref{theorem:long_chains} Ans\"atze for families of spherically symmetric solutions based on a long chain of representations $V_{n+2k}\oplus\cdots\oplus V_n$. We capitalize on this Ansatz for a chain of length 3 in \Cref{theorem:5+3+1Monopoles} to produce Nahm data for a spherically symmetric monopole with symmetry breaking $\mathrm{S}\left( \rU (3)\times \rU (3)\right)$. Chains of length 2 are fully explored in \Cref{theorem:3+1Monopoles,theorem:(n+2)+nMonopoles} to contrast an infinite family of solutions to Nahm's equations, indexed by $n \in \N_+$, that are defined on $(- 1, 1)$ and whose residues correspond to the representation $V_{n + 1} \oplus V_{n + 1}$ on one end and $V_2^{\oplus (n + 1)}$ on the other. The corresponding monopoles are spherically symmetric $\SU (n + 3)$ with symmetry breaking $\mathrm{S} \left( \rU (2) \times \rU (n + 1) \right)$, so their symmetry breaking is neither minimal nor maximal. 

\medskip

\subsection*{Organization of the paper}

The paper is organized as follows: In \Cref{sec:symmetries}, we recall some relevant properties of Nahm's equations, with an emphasis on the two canonical group actions, rotations and gauges. We then introduce the notions of axially and spherically symmetric solutions. In \Cref{sec:axial}, we prove an infinitesimal version of the axial symmetry condition and classify axially symmetric solutions. In \Cref{sec:spherical}, we prove an infinitesimal version of the spherical symmetry condition and we classify (under certain extra hypotheses) and construct novel spherically symmetric solutions. We also provide a method for constructing further spherically symmetric Nahm data. Finally, in \Cref{sec:Nahm_transform}, we use the spherically symmetric solutions found in \Cref{sec:spherical} to construct novel BPS monopoles using the (ADHM--)Nahm transform.

\bigskip

\section{Nahm's equations and their symmetries}
\label{sec:symmetries}

Let us begin with three well-known facts about Nahm's \cref{eq:Nahm1,eq:Nahm2,eq:Nahm3}:
\begin{enumerate}
	\item Since the function
		\begin{equation}
			M_{n \times n} (\cx)^{\oplus 3} \rightarrow M_{n \times n} (\cx)^{\oplus 3}; \quad (A, B, C) \mapsto ([B, C], [C,A], [A,B]),
		\end{equation}
		is analytic (in fact polynomial), solutions to Nahm's equations are (real) analytic and unique for any initial condition on an open interval around the initial point; see, for example, \cite{T12}*{Theorem~4.2}.

	\item Since $\su (n)$ is a matrix Lie-algebra, it is closed under commutators. Thus if a solution to Nahm's equations that is defined on a connected interval $(a, b)$ and for all $i \in \{ 1, 2, 3 \}$ and for some $t_0 \in (a, b)$ satisfies that $T_i (t_0) \in \su (n)$, then it must also satisfy that $T_i (t) \in \su (n)$ for all $t \in (a, b)$ and, again, for all $i \in \{ 1, 2, 3 \}$. In fact, $\su (n)$ can be replaced by an arbitrary Lie algebra.

	\item Any solution to Nahm's equations has the following five conserved quantities:
	\begin{subequations}
	\begin{align}
		C_1 &= \tr \left( T_2 T_3 \right), \label{eq:CC1} \\
		C_2 &= \tr \left( T_3 T_1 \right), \label{eq:CC2} \\
		C_3 &= \tr \left( T_1 T_2 \right), \label{eq:CC3} \\
		C_4 &= \tr \left( T_1^2 - T_2^2 \right), \label{eq:CC4} \\
		C_5 &= \tr \left( T_1^2 - T_3^2 \right). \label{eq:CC5}
	\end{align}
	\end{subequations}
\end{enumerate}

Let $T (\zeta) \coloneqq (T_1 + i T_2) - 2 i T_3 \zeta + (T_1 - i T_2) \zeta^2$. Solutions to Nahm's \cref{eq:Nahm1,eq:Nahm2,eq:Nahm3} have an invariant spectral curve defined by $\det (\eta - T (\zeta)) = 0$. The coefficient of $\eta^{n-2}$ in this equation contains the invariants of \cref{eq:CC1,eq:CC2,eq:CC3,eq:CC4,eq:CC5}, namely
\begin{equation}
	\tr \left( T (\zeta)^2 \right) = C_4 + 2 i C_3 + 4 (C_1 - i C_2) \zeta + (4 C_5 - 2 C_4) \zeta^2 - 4 (C_1 + i C_2) \zeta^3 + (C_4 - 2 i C_3)\zeta^4.
\end{equation} 
The moduli space of spectral curves and the accompanying spectral data (consisting of the cokernel sheaf of $\eta - T (\zeta)$ supported on the spectral curve) have been shown to be equivalent to the monopole moduli space of monopoles in the case of maximal symmetry breaking in \cite{HurMpoleClassGrps,HurMurMpoleConstClassGrps}. Other definitions of spectral data use the monopole itself, not its Nahm data. Those spectral data also determine the monopole for a larger class of monopoles; see \cite{hurtubiseMurraySpectral,MurNonAbelMpoles}.

\smallskip

Solutions to Nahm's equations typically develop singularities. In fact, in all known, nonconstant examples develop at least one, simple (that is, first order) pole. It is not known whether all nonconstant solutions develop singularities, and when they do, what type of singularities can occur. The lemma below---which is well-known in the literature and thus we state it without proof---describes the behavior of a solution at a simple pole.

For the remainder of this paper, let $\left( X_1, X_2, X_3 \right)$ the standard basis of $\so (3)$, that is, for all $i, j, k \in \{ 1, 2, 3 \}$, we have $(X_i)_{jk} = - \epsilon_{ijk}$.

\begin{lemma}
	\label{lemma:poles_and_representations}
	Let $\cT$ be a solution to Nahm's \cref{eq:Nahm1,eq:Nahm2,eq:Nahm3}, such that it is defined on an open interval $(t_0, t_1) \subseteq \rl$, and $\cT$ has a simple pole at $t_0$. Then the map
	\begin{equation}
		\so (3)	\rightarrow \su (n); \quad a_1 X_1 + a_2 X_2 + a_3 X_3 \mapsto a_1 \Res \left( T_1, t_0 \right) + a_2 \Res \left( T_2, t_0 \right) + a_3 \Res \left( T_3, t_0 \right),
	\end{equation}
	is a Lie algebra homomorphism (that is, a representation of $\so (3)$).
\end{lemma}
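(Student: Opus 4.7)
The plan is to substitute a Laurent expansion of $\cT$ at $t_0$ into Nahm's equations and match the most singular terms on both sides.

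Since $t_0$ is a simple pole, I would write each $T_i$ as $T_i(t) = R_i/(t-t_0) + S_i(t)$, where $R_i := \Res(T_i, t_0)$ and $S_i$ is real-analytic near $t_0$. Differentiating termwise, $\dot T_i$ has a pole of order $2$ at $t_0$ with leading coefficient $-R_i$, while the commutator $[T_j, T_k]$ has a pole of order at most $2$ with leading coefficient $[R_j, R_k]$. Inserting these expansions into the compact form \eqref{eq:Nahmwith_LC} of Nahm's equations and matching the $(t-t_0)^{-2}$ coefficients on both sides yields
\[
-R_i \;=\; \tfrac{1}{2}\sum_{j,k=1}^3 \epsilon_{ijk}\, [R_j, R_k].
\]

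Next, I would contract this identity with $\epsilon_{ilm}$ and invoke $\sum_i \epsilon_{ilm}\epsilon_{ijk} = \delta_{lj}\delta_{mk} - \delta_{lk}\delta_{mj}$ to rearrange it as an explicit formula for $[R_l, R_m]$ as a linear combination of the $R_i$'s with coefficients of the form $\pm\epsilon_{lmi}$. Comparing with the structure constants of $\{X_1, X_2, X_3\}$ fixed by $(X_i)_{jk} = -\epsilon_{ijk}$, one sees that the linear extension of $X_i \mapsto R_i$ preserves the Lie bracket and therefore defines a Lie algebra homomorphism $\so(3) \to \su(n)$.

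There is no genuine analytic obstacle in this argument: solutions to Nahm's equations are analytic on their open intervals of definition, so the Laurent expansion exists and can be differentiated termwise at the pole, and all identifications are justified. The only thing requiring care is the bookkeeping of signs, specifically that the minus in $-R_i$ (arising from $\tfrac{d}{dt}(t-t_0)^{-1} = -(t-t_0)^{-2}$) and the minus in the convention $(X_i)_{jk} = -\epsilon_{ijk}$ thread correctly through the $\epsilon$-contraction to produce the claimed matching of structure constants on the two sides.
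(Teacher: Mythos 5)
Your overall strategy is the standard one (the paper itself states this lemma without proof, as folklore), and the first two steps are carried out correctly: the Laurent expansion, the termwise differentiation, and the matching of the $(t - t_0)^{-2}$ coefficients legitimately yield $-R_i = \tfrac{1}{2}\sum_{j,k}\epsilon_{ijk}[R_j, R_k]$, and the $\epsilon$-contraction then gives
\begin{equation}
	[R_l, R_m] = -\sum\limits_{i = 1}^3 \epsilon_{lmi} R_i .
\end{equation}
The problem is the last step, which you explicitly identify as the only delicate point and then assert rather than verify. It does not work out as claimed. With the convention $(X_i)_{jk} = -\epsilon_{ijk}$ one computes $[X_l, X_m] = +\sum_i \epsilon_{lmi} X_i$ (e.g.\ $[X_1, X_2] = X_3$): the minus sign in the definition of $X_i$ does \emph{not} flip the structure constants, because it cancels in the product $X_l X_m$. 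So the derived relation for the residues has the \emph{opposite} sign to the structure constants of the $X_i$, and the linear extension of $X_i \mapsto R_i$ reverses the bracket rather than preserving it --- it is a Lie algebra anti-homomorphism whenever the $R_i$ do not all commute.

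The substantive content survives, but you have to say how. One clean fix: set $R_i' \coloneqq -R_i$ and check $[R_l', R_m'] = [R_l, R_m] = -\epsilon_{lmi}R_i = +\epsilon_{lmi}R_i'$, so $X_i \mapsto -\Res(T_i, t_0)$ \emph{is} a homomorphism; equivalently, the span of the residues is a Lie subalgebra of $\su(n)$ isomorphic to a quotient of $\so(3)$, which is what ``the residues form a representation'' is meant to capture (and is all that is used later, e.g.\ in the Casimir computations of \Cref{theorem:3+1Monopoles,theorem:(n+2)+nMonopoles}). As literally printed, the map in \Cref{lemma:poles_and_representations} shares this sign slip, so you should not feel obliged to reproduce it --- but a proof must either carry out the sign computation and record the discrepancy, or silently absorb it into a harmless convention; asserting that the signs ``thread correctly'' when they in fact produce a mismatch is exactly the kind of gap this argument cannot afford, since the sign bookkeeping is the entire content of the final step.
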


\smallskip

There are two canonical group actions on solutions to Nahm's equations.

\begin{definition}\label{definition:actions}
	Let $\cT = \left( T_1, T_2, T_3 \right) \in M_{n \times n} (\cx)^{\oplus 3}$.
	\begin{enumerate}
		\item For each $A \in \SO (3)$, define $^A \cT = (^A T_1, ^A T_2, ^A T_3)$ via
		\begin{equation}
			\forall i \in \{ 1, 2, 3 \}: \quad ^A T_i \coloneqq \sum\limits_{j = 1}^3 A_{ij} T_j. \label{eq:SO(3)_action}
		\end{equation}
		\item For each $U \in \SU (n)$, define $_U \cT = (_U T_1, {}_U T_2, {}_U T_3)$ via
		\begin{equation}
			\forall i \in \{ 1, 2, 3 \}: \quad _U T_i \coloneqq U T_i U^{-1}. \label{eq:SU(n)_action}
		\end{equation}
	\end{enumerate}
	We denote the induced actions of $X \in \so (3)$ and $Y \in \su (n)$ on $\cT$ as ${}^X \cT $ and $[Y, \cT]$, respectively.
\end{definition}

\smallskip

\begin{remark}
	The Nahm transform (explained in \Cref{sec:Nahm_transform}) provides a correspondence between monopoles and some solutions to Nahm's equations.	The action defined in \eqref{eq:SO(3)_action} corresponds on the monopole side to pulling back via the same rotation on $\rl^3$.
\end{remark}

\smallskip

The following proposition is straightforward, hence we state it without proof.

\begin{proposition}\label{proposition:actions}
	The actions defined by \cref{eq:SO(3)_action,eq:SU(n)_action} define smooth group actions. Furthermore, the two actions commute and preserve $\su (n)^{\oplus 3}$, that is
	\begin{equation}
		\forall A \in \SO(3): \forall U \in \SU (n): \quad ^A (_U \cT) = {}_U (^A \cT) \eqqcolon {}_U^A \cT. \label{eq:actions_commute}
	\end{equation}
	and if $\cT \in \su (n)^{\oplus 3}$, then $_U^A \cT \in \su (n)^{\oplus 3}$.
\end{proposition}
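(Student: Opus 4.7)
The plan is to verify each of the three assertions in turn using only the explicit formulas from \Cref{definition:actions}. Both defining formulas are polynomial (in fact multilinear) in the entries of the acting group element and of the triple $\cT$, so smoothness is immediate; what remains is to verify the group action axioms, the commutativity, and the preservation of $\su(n)^{\oplus 3}$.

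First, for the group action axioms of the $\SO(3)$ action, the identity axiom is immediate from $I_{ij}=\delta_{ij}$, while the composition axiom ${}^{AB}\cT={}^A\left({}^B\cT\right)$ reduces to the matrix multiplication identity $(AB)_{ij}=\sum_k A_{ik}B_{kj}$ combined with a relabeling of summation indices in \eqref{eq:SO(3)_action}. For the $\SU(n)$ action, conjugation is the standard representation of a matrix Lie group on its ambient algebra, so
\begin{equation*}
    {}_{UV}T_i = (UV)T_i(UV)^{-1} = U\left(V T_i V^{-1}\right)U^{-1} = {}_U\left({}_V T_i\right).
\end{equation*}

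Second, for the commutativity relation \eqref{eq:actions_commute}, I would evaluate both sides componentwise:
\begin{equation*}
    {}^A\left({}_U\cT\right)_i = \sum_{j=1}^3 A_{ij}\,U T_j U^{-1} = U\left(\sum_{j=1}^3 A_{ij} T_j\right)U^{-1} = {}_U\left({}^A\cT\right)_i,
\end{equation*}
where the middle equality uses only that the scalars $A_{ij}$ commute past $U$ (equivalently, $\cx$-linearity of conjugation by $U$). Third, for the preservation statement, $\su(n)$ is a real linear subspace of $M_{n\times n}(\cx)$ that is closed under conjugation by elements of $\SU(n)$. Since ${}^A T_i$ is a real linear combination of the $T_j$, the $\SO(3)$ action preserves each factor $\su(n)$; and conjugation by $U\in\SU(n)$ preserves both tracelessness and anti-Hermiticity, so the $\SU(n)$ action preserves $\su(n)^{\oplus 3}$ as well.

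There is no real obstacle here: the proposition is pure bookkeeping, which is why the authors state it without proof. The only point requiring any care is to keep conventions straight between the summation index in \eqref{eq:SO(3)_action} and the matrix product, so that the composition law yields a left action (rather than a right one); with the convention ${}^A T_i = \sum_j A_{ij} T_j$ recorded there, this falls out automatically from the standard row-column pairing.
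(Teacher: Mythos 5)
Your verification is correct and complete; it is precisely the routine componentwise check that the paper deliberately omits ("straightforward, hence we state it without proof"). Nothing to add.
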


\smallskip

Let now $\cT$ be a smooth function
\begin{equation}
	\cT : (a, b) \rightarrow \su (n)^{\oplus 3}; \quad t \mapsto \cT (t) = \left( T_1 (t), T_2 (t), T_3 (t) \right).
\end{equation}
By an abuse of notation, let $_U^A \cT$ be the function defined as
\begin{equation}
	_U^A \cT : (a, b) \rightarrow \su (n)^{\oplus 3}; \quad t \mapsto {}_U^A (\cT (t)).
\end{equation}
Then we have the following lemma for solutions to Nahm's equations.

\begin{lemma}\label{lemma:invariance}
	Let $\cT$ be a solution to Nahm's \cref{eq:Nahm1,eq:Nahm2,eq:Nahm3} on some connected, open interval $I \subseteq \rl$. Then for all $A \in \SO (3)$ and for all $U \in \SU (n)$, the function $_U^A \cT$ also solves Nahm's equations.

	Furthermore, let $\widetilde{\cT}$ be another solution to Nahm's equations. If there exists $t_0 \in I \cap \dom \left( \widetilde{\cT} \right)$, $A \in \SO (3)$, and $U \in \SU (n)$, such that $\widetilde{\cT} (t_0) = {}_U^A \cT (t_0)$, then $\widetilde{\cT}$ can be extended to all of $I$, and $_U^A \cT|_{I \cap \dom \left( \widetilde{\cT} \right)} = \widetilde{\cT}|_{I \cap \dom \left( \widetilde{\cT} \right)}$.
\end{lemma}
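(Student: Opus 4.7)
The plan is to prove the invariance statement by treating each of the two group actions separately and combining them via \Cref{proposition:actions}, and then to derive the second assertion from analytic ODE uniqueness, already noted at the start of \Cref{sec:symmetries}. The $\SU(n)$-invariance is immediate: a constant gauge transformation commutes with $\frac{d}{dt}$ and, being an inner automorphism, preserves commutators, so conjugating both sides of \cref{eq:Nahmwith_LC} by $U$ preserves the equation. For the $\SO(3)$-invariance, I would substitute ${}^A T_i = \sum_j A_{ij} T_j$ into both sides of \cref{eq:Nahmwith_LC}; after expanding, the identity reduces to the standard fact $\sum_{j,k} \epsilon_{ijk} A_{jp} A_{kq} = \sum_r A_{ir} \epsilon_{rpq}$, which is precisely the $\SO(3)$-equivariance of the Levi-Civita tensor (equivalent to orthogonality together with $\det A = 1$). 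Since the two actions commute by \Cref{proposition:actions}, the composition ${}_U^A\cT$ remains a solution.

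For the uniqueness and extension claim, the first part gives that ${}_U^A\cT$ is an analytic solution defined on the entire interval $I$. Since solutions to Nahm's equations are real analytic and uniquely determined by their value at a single point of a connected interval of definition, and since $I \cap \dom(\widetilde{\cT})$ is a connected open interval containing $t_0$, the two solutions ${}_U^A\cT$ and $\widetilde{\cT}$ must coincide on this intersection. The extension of $\widetilde{\cT}$ to all of $I$ (indeed to $I \cup \dom(\widetilde{\cT})$) is then obtained by gluing $\widetilde{\cT}$ with ${}_U^A\cT$ over $I$; agreement on the overlap makes the glued function well-defined and automatically a solution on the larger interval.

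I do not anticipate any real obstacle: the argument rests on two standard facts, namely the $\SO(3)$-equivariance of the Levi-Civita tensor and analytic ODE uniqueness. The only point requiring minor care is confirming that $I \cap \dom(\widetilde{\cT})$ is a single connected interval so that uniqueness at $t_0$ propagates across the whole intersection, but this is automatic under the standing convention that solutions are defined on connected intervals.
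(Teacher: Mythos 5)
Your proposal is correct and follows essentially the same route as the paper: splitting the action into the gauge and rotation parts, using $\Ad(U)$-invariance of the bracket for the former and the $\SO(3)$-equivariance of the Levi-Civita tensor (equivalent to the determinant identity \eqref{eq:determinant}) for the latter, and then invoking uniqueness for the initial value problem to get the second claim. The extra care you take with gluing over the connected overlap is a welcome, if minor, elaboration of what the paper leaves implicit.
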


\begin{proof}
	Since the actions of $\SO (3)$ and $\SU (n)$ commute, by \cref{eq:actions_commute}, it is enough to verify the first claim separately for rotations and gauges.

	Let $\cT$ be a solution to Nahm's \cref{eq:Nahm1,eq:Nahm2,eq:Nahm3} and $A \in \SO (3)$. Let $(a_{ij})_{i, j = 1}^3$ be the components of $A$. Recall the formula
	\begin{equation}
		\forall A \in \SO (3): \quad \sum\limits_{l, m, n = 1}^3 \epsilon_{lmn} a_{il} a_{jm} a_{kn} = \epsilon_{ijk}. \label{eq:determinant}
	\end{equation}
	Furthermore, note that for all $i \in \{ 1, 2, 3 \}$, $T_i = \sum_{l = 1}^3 a_{li} {}^AT_l$. Using these, we can compute the right hand side of \cref{eq:Nahmwith_LC} for $^A \cT$, and get
	\begin{align}
		\frac{\rd}{\rd t} (^A T)_i	&= \sum\limits_{l = 1}^3 a_{il} \dot{T}_l \\
									&= \sum\limits_{l = 1}^3 a_{il} \frac{1}{2} \sum\limits_{m, n = 1}^3 \epsilon_{lmn} [T_m, T_n] \\
									&= \frac{1}{2} \sum\limits_{i, j, k, l, m, n = 1}^3 a_{il} \epsilon_{lmn} a_{jm} a_{kn} [(^A T)_j, (^A T)_k] \\
									&= \frac{1}{2} \sum\limits_{i, j, k = 1}^3 \epsilon_{ijk} [(^A T)_j, (^A T)_k],
	\end{align}
	which completes the proof of the first claim for rotations.

	Since $[\Ad (U) (T), \Ad (U) (S)] = \Ad (U) ([T, S])$, it is clear that $\cT$ is a solution to Nahm's equations, if and only if $_U \cT$ is a solution, which completes the proof of the first claim for gauges.

	The second claim follows from the uniqueness of solutions to the initial value problem corresponding to Nahm's equations.
\end{proof}

\smallskip

The following is then immediate.

\begin{corollary}
	Let $A \in \SO (3)$, $U \in \SU (n)$, and $\cT_0 \in \su (n)^{\oplus 3}$, such that $_U^A \cT_0 = \cT_0$. If
	\begin{equation}
		\cT : (t_0 - \epsilon, t_0 + \epsilon) \rightarrow \su (n)^{\oplus 3},
	\end{equation}
	is the (unique) solution to Nahm's \cref{eq:Nahm1,eq:Nahm2,eq:Nahm3} with the initial condition $\cT (t_0) = \cT_0$, then ${}_U^A \cT = \cT$, that is
	\begin{equation}
		\forall t \in (t_0 - \epsilon, t_0 + \epsilon): \quad {}_U^A (\cT (t)) = \cT (t).
	\end{equation}
\end{corollary}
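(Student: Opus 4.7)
The plan is to invoke \Cref{lemma:invariance} directly, treating the corollary as an equivariance statement for initial value problems. First I would apply the first claim of the lemma to the given solution $\cT$: since $\cT$ solves Nahm's equations on $(t_0 - \epsilon, t_0 + \epsilon)$, so does ${}_U^A \cT$ on the same interval. Next I would evaluate both solutions at $t_0$ and use the abuse of notation introduced before the lemma, namely that the action is applied pointwise, to compute
\begin{equation*}
    {}_U^A \cT (t_0) \;=\; {}_U^A \left( \cT(t_0) \right) \;=\; {}_U^A \cT_0 \;=\; \cT_0 \;=\; \cT(t_0),
\end{equation*}
where the middle equality is exactly the invariance hypothesis on $\cT_0$. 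Finally I would feed this coincidence at $t_0$ into the second claim of \Cref{lemma:invariance}, whose content is uniqueness for the initial value problem, to conclude that ${}_U^A \cT$ and $\cT$ agree on their common domain; but the common domain here is the entire interval $(t_0 - \epsilon, t_0 + \epsilon)$, so this yields the desired identity for every $t$ in that interval.

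I do not anticipate any real obstacle: the corollary is a formal consequence of the equivariance of Nahm's equations under the two commuting actions together with analytic uniqueness, and the preceding lemma already packages everything needed. The only points worth checking are the trivial ones, namely that ${}_U^A \cT$ genuinely has the same domain as $\cT$ (immediate from the pointwise definition) and that the actions of $\SO(3)$ and $\SU(n)$ commute with evaluation in $t$, so no extension step from the second half of \Cref{lemma:invariance} is actually required in this application.
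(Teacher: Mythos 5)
Your argument is correct and is exactly what the paper has in mind: the paper simply declares the corollary ``immediate'' from \Cref{lemma:invariance}, and your two-step application of that lemma (first claim to see that ${}_U^A \cT$ is again a solution, then the pointwise computation ${}_U^A \cT (t_0) = \cT (t_0)$ followed by the uniqueness claim) is the intended reasoning. No gaps.
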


\smallskip

\begin{remark}
	If $A \in \mathrm{O} (3)$, then $^A \cT$ can still be defined via \eqref{eq:SO(3)_action}. Then \cref{eq:determinant} becomes
	\begin{equation}
		\sum\limits_{l, m, n = 1}^3 \epsilon_{lmn} a_{il} a_{jm} a_{kn} = \det (A) \epsilon_{ijk}.
	\end{equation}
	In particular, when $A \not\in \SO (3)$, the only thing that changes in the proof of \Cref{lemma:invariance} is the sign in \cref{eq:determinant}. Thus if $\cT$, is a solution to Nahm's \cref{eq:Nahm1,eq:Nahm2,eq:Nahm3}, then $(S_1, S_2, S_3) \coloneqq {}^A \cT$ is a solution to the \emph{anti-Nahm equations}:
	\begin{equation}
		\forall i \in \{ 1, 2, 3 \}: \quad \dot{S}_i = - \frac{1}{2} \sum\limits_{j, k = 1}^3 \epsilon_{ijk} [S_j, S_k].
	\end{equation}	
\end{remark}

\smallskip

\begin{definition}
	Fix $A \in \SO (3)$ and let $I \subseteq \rl$ be a nonempty, connected, and open interval. We call a solution $\cT : I \rightarrow \su (n)^{\oplus 3}$ to Nahm's \cref{eq:Nahm1,eq:Nahm2,eq:Nahm3} \emph{$A$-equivariant}, if there exists $U_A \in \SU (n)$, such that
	\begin{equation}
		^A \cT = {}_{U_A^{- 1}} \cT. \label{eq:A-equivariance}
	\end{equation}
	Similarly, if $H \subseteq \SO (3)$, then $\cT$ is \emph{$H$-equivariant}, if for all $A \in H$, $\cT$ is $A$-equivariant.
\end{definition}

\smallskip

\begin{remark}\label{rem:UA_properties}
	\Cref{eq:A-equivariance} can be written as $_{U_A}^A \cT = \cT$, thus $\cT$ is a fixed by the simultaneous actions of $A$ and $U_A$.

	Note also that when $\cT$ is $H$-equivariant (for some $H \subseteq \SO (3)$), we did not require anything else from the map
	\begin{equation}
		H \rightarrow \SU (n); \quad A \mapsto U_A,
	\end{equation}
	in particular it need not be homomorphic or smooth, and may not even be unique.

	Nonetheless, without any loss of generality, we can always assume that $U_{\Id_3} = \Id_n$ and, using \cref{eq:A-equivariance}, we can see that if $\cT$ is equivariant for both $A_1$ and $A_2$, with corresponding gauge transformations $U_{A_1}$ and $U_{A_2}$, respectively, then
	\begin{equation}
		^{A_1 A_2} \cT = {}^{A_1} ( {}^{A_2} \cT) = {}^{A_1} ( {}_{U_{A_2}^{- 1}} \cT) = {}_{U_{A_2}^{- 1}} ( {}^{A_1} \cT) = {}_{U_{A_2}^{- 1}} ( {}_{U_{A_1}^{- 1}} \cT) = {}_{U_{A_2}^{- 1} U_{A_1}^{- 1}} \cT = {}_{(U_{A_1} U_{A_2})^{- 1}} \cT,
	\end{equation}
	so $\cT$ is $A_1 A_2$-equivariant, and we can choose the corresponding gauge transformation to be $U_{A_1 A_2} = U_{A_1} U_{A_2}$. Thus the set
	\begin{equation}
		H_\cT \coloneqq \{ A \in \SO (3) | \mbox{ $\cT$ is $A$-equivariant } \},
	\end{equation}
	is a subgroup of $\SO (3)$.
\end{remark}

\smallskip

The goal of this paper is to study and construct $H$-invariant solutions to Nahm's equations, with $H$ being a connected, nontrivial Lie subgroup of $\SO (3)$, thus either $H \simeq \SO (2)$, or $H = \SO (3)$. Motivated by spatial geometry, we make the following definitions:

\smallskip

\begin{definition}\label{definition:symmetries}
	When $H \simeq \SO (2)$, the solution is called \emph{axially symmetric}. Similarly, when $H = \SO (3)$, the solution is called \emph{spherically symmetric}.
\end{definition}

\smallskip

\begin{remark}
	The conserved quantities in \cref{eq:CC1,eq:CC2,eq:CC3,eq:CC4,eq:CC5} transform according to the 5-dimensional irreducible representation of $\SO (3)$ under the action of $\SO (3)$, and are invariant under the action of $\SU (n)$; see for instance \cite{Dancer-NahmHyperKahler}*{Equation~(21)}.

	More precisely, let $\cT$ be a solution to Nahm's equation and let us redefine the corresponding conserved quantities as
	\begin{equation}
		\forall i, j \in \{ 1, 2, 3 \}: \quad C_{ij} \left( \cT \right) \coloneqq \tr \left( T_i T_j \right) - \frac{1}{3} \sum\limits_{k = 1}^2 \tr \left( T_k^2 \right). \label{eq:C_ij}
	\end{equation}
	Note that \cref{eq:C_ij} defines a 3-by-3, real, symmetric, and traceless matrix, and its 5 independent parameters can be chosen to be the conserved quantities in \cref{eq:CC1,eq:CC2,eq:CC3,eq:CC4,eq:CC5}. Let us denote this matrix by $\cC \left( \cT \right)$. Then for all $A \in \SO (3)$ and $U \in \SU (n)$ we have
	\begin{equation}
		\cC \left( {}_U^A \cT \right) = A \cC \left( \cT \right) A^{- 1}.
	\end{equation}
	In particular, if $\cT$ is spherically symmetric, then all conserved quantities in \cref{eq:CC1,eq:CC2,eq:CC3,eq:CC4,eq:CC5} must vanish by Schur's Lemma. In fact, any $\tr(T(\zeta)^k)=0$ vanish for the same reason and so the spectral curve of any spherically symmetric monopole is just $\eta^n=0$.
\end{remark}

\smallskip

In either of the above cases, $H$-invariance implies the existence of a function, $U : H \rightarrow \SU (n)$, such that
\begin{equation}
	\forall A \in H : \quad {}^A \cT = {}_{U (A)^{- 1}} \cT.
\end{equation}
However, as in \Cref{rem:UA_properties}, this function need not be homomorphic or smooth, and moreover, it need not be unique. Indeed, if the stabilizer subgroup of $\cT$
\begin{equation}
	S_\cT \coloneqq \{ U \in \SU (n) \ | \ {}_U \cT = \cT \},
\end{equation}
is nontrivial, then one can choose a function $H \rightarrow S_\cT$ and ``twist''. Hence, $U$ may not be unique, and moreover, even if $U$ was homomorphic or smooth, the twisted version may not have these properties.

\smallskip

The case in which $U$ can be chosen to have some regularity, at least around the identity, is easier to handle. For this reason, our main theorems require that $U$ is continuously differentiable at the identity.

\bigskip

\section{Axially symmetric solutions}
\label{sec:axial}

In this section, we consider axially symmetric solutions, that is, when $H$ is isomorphic to $\SO (2)$ in \Cref{definition:symmetries}.

\smallskip

All subgroups of $\SO (3)$ that are isomorphic to $\SO (2)$ are maximal tori of $\SO (3)$ and thus are conjugate to each other. Moreover, they can be viewed as rotations around a given, oriented axis (that is, an oriented line through the origin). Hence, without loss of generality, it is enough to study one of them. Let
\begin{equation}
	A: \rl \rightarrow \SO (3); \quad \theta \mapsto A (\theta) \coloneqq \begin{pmatrix} \cos (\theta) & - \sin (\theta) & 0 \\ \sin (\theta) & \cos (\theta) & 0 \\ 0 & 0 & 1 \end{pmatrix}. \label{eq:form_of_A}
\end{equation}
Then our choice of such a subgroup is
\begin{equation}
	H \coloneqq \left\{ A (\theta) \ \middle| \ \theta \in [0, 2 \pi) \right\}. \label{eq:form_of_H}
\end{equation}
This subgroup is the group of rotations around the third axis. Moreover, \cref{eq:form_of_A} provides a global parametrization of $H$.

\smallskip

Our first theorem gives an infinitesimal version of axial symmetry.

\begin{theorem}
	\label{theorem:Axial_Symmetry}
	If $\cT = \left( T_1, T_2, T_3 \right)$ is axially symmetric around the third axis and the corresponding $U$ function in \eqref{eq:A-equivariance} can be chosen to be continuously differentiable at the identity of $\SO (3)$, then there exists a $Y \in \su (n)$ such that
	\begin{subequations}
	\begin{align}
		T_1	&= [T_2, Y], \label{eq:Linearized_Axial_Symmetry_1} \\
		T_2 &= [Y, T_1], \label{eq:Linearized_Axial_Symmetry_2} \\
		0		&= [Y, T_3]. \label{eq:Linearized_Axial_Symmetry_3}
	\end{align}
	\end{subequations}
	Conversely, if \cref{eq:Linearized_Axial_Symmetry_1,eq:Linearized_Axial_Symmetry_2,eq:Linearized_Axial_Symmetry_3} are satisfied, then $\cT$ is axially symmetric around the third axis.

	More generally, $\cT$ is axially symmetric around some axis if there exists $B \in \SO (3)$ such that $^B \cT$ satisfies \cref{eq:Linearized_Axial_Symmetry_1,eq:Linearized_Axial_Symmetry_2,eq:Linearized_Axial_Symmetry_3}. 
\end{theorem}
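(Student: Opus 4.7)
Axial symmetry about the third axis, with $U(0) = \Id_n$, means
\begin{equation*}
\sum_{j=1}^3 A(\theta)_{ij} T_j = U(\theta)^{-1} T_i U(\theta), \qquad i \in \{1, 2, 3\},
\end{equation*}
where $U$ is $C^1$ at $\theta = 0$. For the \emph{forward direction}, set $Y \coloneqq \tfrac{d U}{d\theta}\big|_{0} \in \su(n)$ and differentiate both sides at $\theta = 0$. Using the explicit form \eqref{eq:form_of_A}, the left-hand side contributes $-T_2$, $T_1$, $0$ for $i = 1, 2, 3$ respectively, while the right-hand side gives $[T_i, Y]$; matching yields precisely \eqref{eq:Linearized_Axial_Symmetry_1}--\eqref{eq:Linearized_Axial_Symmetry_3}.

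For the \emph{converse}, given $Y \in \su(n)$ satisfying the three relations, put $U(\theta) \coloneqq \exp(\theta Y) \in \SU(n)$ and define
\begin{equation*}
S_i(\theta) \coloneqq \sum_{j=1}^3 A(\theta)_{ij} T_j, \qquad V_i(\theta) \coloneqq U(\theta)^{-1} T_i U(\theta).
\end{equation*}
Both agree at $\theta = 0$, and $\dot V_i(\theta) = [V_i(\theta), Y]$ follows directly from the exponential. Using \eqref{eq:form_of_A} and the rewritings $[T_1, Y] = -T_2$, $[T_2, Y] = T_1$, $[T_3, Y] = 0$ of the hypotheses, a short computation verifies $\dot S_i(\theta) = [S_i(\theta), Y]$ as well. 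Uniqueness of solutions to this linear ODE forces $S_i = V_i$ for all $\theta$, which is the desired equivariance.

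For the \emph{general axis}, suppose ${}^B \cT$ satisfies \eqref{eq:Linearized_Axial_Symmetry_1}--\eqref{eq:Linearized_Axial_Symmetry_3}. The converse gives that ${}^B \cT$ is $H$-equivariant, and then \Cref{proposition:actions} together with the identity ${}^{B^{-1} A B} \cT = {}^{B^{-1}}({}^A({}^B \cT))$ shows $\cT$ is $(B^{-1} H B)$-equivariant; since $B^{-1} H B$ is the one-parameter subgroup rotating about the axis obtained from the third axis via $B^{-1}$, $\cT$ is axially symmetric. The main obstacle here is modest: mostly sign- and index-bookkeeping in the differentiation step, with $Y \in \su(n)$ being automatic from $U$ being a $C^1$ path through $\Id_n$ in $\SU(n)$.
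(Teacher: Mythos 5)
Your proposal is correct and follows essentially the same route as the paper: differentiate the equivariance identity at $\theta=0$ to get the three commutator relations (with the same sign bookkeeping), and for the converse exponentiate $Y$ and use an ODE-uniqueness argument — the paper shows $\tfrac{\rd}{\rd\theta}\bigl({}_{U(\theta)}^{A(\theta)}\cT\bigr)=0$ directly, while you show both sides of the equivariance identity solve the same linear ODE $\dot X=[X,Y]$, which is the same computation packaged slightly differently. Your conjugation argument for the general axis just makes explicit what the paper defers to the discussion at the start of the section.
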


\begin{definition}
	We call $Y$ the \emph{generator of the axially symmetry for $\cT$}.
\end{definition}

\begin{proof}
	Assume that $\cT = \left( T_1, T_2, T_3 \right)$ is axially symmetric around the third axis and the corresponding $U$ function in \eqref{eq:A-equivariance} can be chosen so that it is continuously differentiable at the identity of $\SO (3)$, and let $Y \coloneqq \tfrac{\rd}{\rd \theta} \left( U_{A (\theta)} \right)\big|_{\theta = 0}$. Then \cref{eq:Linearized_Axial_Symmetry_1,eq:Linearized_Axial_Symmetry_2,eq:Linearized_Axial_Symmetry_3} are just the linearizations of
	\begin{equation}
		{}_{U_{A (\theta)}}^{A (\theta)} \cT = \cT,
	\end{equation}
	at $\theta = 0$.

	On the other hand, if \cref{eq:Linearized_Axial_Symmetry_1,eq:Linearized_Axial_Symmetry_2,eq:Linearized_Axial_Symmetry_3} hold for some $Y \in \su (n)$, then, for all $\theta \in \rl$, let $A (\theta)$ be defined via \cref{eq:form_of_A} and let
	\begin{equation}
		U (\theta) \coloneqq \exp \left( \theta Y \right) \in \SU (n). \label{eq:form_of_U}
	\end{equation}
	Now \cref{eq:Linearized_Axial_Symmetry_1,eq:Linearized_Axial_Symmetry_2,eq:Linearized_Axial_Symmetry_3} are equivalent to
	\begin{equation}
		^{\dot{A} (0)} \cT +[Y, \cT] = 0.
	\end{equation}
	Now simple computation, using \cref{eq:form_of_A,eq:form_of_U}, shows that for all $\theta$,
	\begin{equation}
		\frac{\rd}{\rd \theta} \left( {}_{U (\theta)}^{A (\theta)} \cT \right) = {}^{\dot{A} (\theta)}_{U (\theta)} \cT + {}^{A (\theta)}_{\dot{U} (\theta)} \cT = {}^{\dot{A} (0)} \left( {}_{U (\theta)}^{A (\theta)} \cT \right) + \left[ Y, {}_{U (\theta)}^{A (\theta)} \cT \right] = {}_{U (\theta)}^{A (\theta)} \left( {}^{\dot{A} (0)}\cT +[Y, \cT] \right) = 0.
	\end{equation}
	Hence
	\begin{equation}
		\forall \theta \in \rl : \quad \cT = {}_{U (0)}^{A (0)} \cT = {}_{U (\theta)}^{A (\theta)} \cT,
	\end{equation}
	which is equivalent to $\cT$ being axially symmetric around the third axis.

	The last claim follows from the discussion in the beginning of this section.
\end{proof}

\smallskip

The moral of \Cref{theorem:Axial_Symmetry} is that, up to rotation and gauge, axially symmetric solutions to Nahm's equations are labeled by elements $Y \in \su (n)$. Note that after a gauge transformation of $\cT$, the corresponding $Y$ changes by the adjoint action of $\SU (n)$. Thus, we only need to consider the adjoint orbits in $\su (n)$. It is easy to find canonical representatives in every orbit: In every orbit there is a unique $Y$ of the form $Y = i \cdot \diag (\alpha_1, \alpha_2, \ldots, \alpha_n)$ with $\alpha_k \in \rl$, $\alpha_k \geqslant \alpha_{k + 1}$. Of course, $\sum_{i = 1}^n \alpha_i = 0$ has to also hold, as elements of $\su (n)$ are traceless. However, there are adjoint orbits that cannot carry a nontrivial Nahm datum satisfying \cref{eq:Linearized_Axial_Symmetry_1,eq:Linearized_Axial_Symmetry_2,eq:Linearized_Axial_Symmetry_3}, as shown in the next lemma.

\begin{lemma}
	\label{lemma:negative_1}
	If $\cT$ is a nonconstant, axially symmetric solution to Nahm's \cref{eq:Nahm1,eq:Nahm2,eq:Nahm3}, and $Y \in \su (n)$ is the generator of the axially symmetry for $\cT$, then $-1 \in \Spec \left( \ad_Y^2 \right)$.

	Conversely, if $- 1 \in \Spec \left( \ad_Y^2 \right)$, then there is a nonconstant, axially symmetric solution to Nahm's equations, whose generator of the axial symmetry is $Y$.
\end{lemma}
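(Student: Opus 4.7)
\medskip

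The plan is to use the infinitesimal characterization of axial symmetry from \Cref{theorem:Axial_Symmetry} together with the uniqueness/invariance property in \Cref{lemma:invariance}. For the forward direction, assume $\cT$ is nonconstant and axially symmetric with generator $Y$, so \cref{eq:Linearized_Axial_Symmetry_1,eq:Linearized_Axial_Symmetry_2,eq:Linearized_Axial_Symmetry_3} hold pointwise. Substituting \eqref{eq:Linearized_Axial_Symmetry_2} into \eqref{eq:Linearized_Axial_Symmetry_1} yields $\ad_Y^2 T_1(t) = -T_1(t)$ for every $t$. If $T_1 \equiv 0$, then \eqref{eq:Linearized_Axial_Symmetry_2} makes $T_2 \equiv 0$ as well, so \eqref{eq:Nahm3} forces $T_3$ constant and $\cT$ itself constant, contradicting the hypothesis. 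Hence $T_1(t_0) \neq 0$ at some $t_0$, providing a nonzero eigenvector of $\ad_Y^2$ with eigenvalue $-1$.

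For the converse, the cyclicity identity $\tr(Y[A,B]) = \tr([Y,A]B)$ shows that $\ad_Y$ is skew-symmetric with respect to the real inner product $\langle A, B \rangle := -\tr(AB)$ on $\su(n)$. Consequently $\ad_Y^2$ is symmetric and negative semi-definite over the reals, and its $(-1)$-eigenspace $V \subseteq \su(n)$ is a real subspace, nontrivial by hypothesis. Pick $T_1^0 \in V \setminus \{0\}$ and set
\begin{equation*}
	\cT_0 := \left( T_1^0,\; \ad_Y T_1^0,\; Y \right).
\end{equation*}
A direct check verifies \cref{eq:Linearized_Axial_Symmetry_1,eq:Linearized_Axial_Symmetry_2,eq:Linearized_Axial_Symmetry_3}: $[Y, T_1^0] = \ad_Y T_1^0$ by definition, $[\ad_Y T_1^0, Y] = -\ad_Y^2 T_1^0 = T_1^0$, and $[Y, Y] = 0$.

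Let $\cT$ denote the unique local Nahm solution with initial value $\cT_0$. The linearized conditions at $\cT_0$ are precisely the vanishing at $\cT_0$ of the fundamental vector field of the smooth one-parameter action $\theta \mapsto {}_{\exp(\theta Y)}^{A(\theta)}(\cdot)$; by uniqueness for that flow, $\cT_0$ is in fact a fixed point of the entire action. \Cref{lemma:invariance}, applied one $\theta$ at a time, then produces another Nahm solution ${}_{\exp(\theta Y)}^{A(\theta)} \cT$ sharing the initial value of $\cT$, forcing the two to coincide---which is precisely axial symmetry of $\cT$ with generator $Y$. Nonconstancy is immediate from $\dot T_1(0) = [T_2^0, Y] = T_1^0 \neq 0$. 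The only nontrivial ingredient is the passage from the infinitesimal symmetry at $\cT_0$ to the integrated invariance of $\cT_0$, which is the standard fact that zeros of a smooth vector field are fixed points of its flow.
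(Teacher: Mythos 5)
Your proof is correct and follows essentially the same route as the paper: the forward direction reads $\ad_Y^2 T_1 = -T_1$ off the linearized symmetry equations, and the converse builds initial data $\left(T_1^0, \ad_Y T_1^0, T_3^0\right)$ with $T_3^0 \in \ker(\ad_Y)$ and invokes Picard--Lindel\"of. You are in fact slightly more careful than the paper in two spots it leaves implicit---ruling out $T_1 \equiv 0$ via nonconstancy in the forward direction, and spelling out how the pointwise infinitesimal condition propagates to genuine $\SO(2)$-equivariance of the whole solution via uniqueness.
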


\begin{remark}
	Note that $- 1$ is in the spectrum of $\ad_Y^2$, if and only if $\alpha_j = \alpha_i + 1$, for some $1 \leqslant i < j \leqslant n$.
\end{remark}

\begin{proof}[Proof of \Cref{lemma:negative_1}]
	\Cref{eq:Linearized_Axial_Symmetry_1,eq:Linearized_Axial_Symmetry_2} imply that
	\begin{equation}
		\ad_Y^2 \left( T_1 \right) = - T_1, \quad \mbox{and} \quad \ad_Y^2 \left( T_2 \right) = - T_2.
	\end{equation}
	This proves the first claim.

	For the converse, we first show that \cref{eq:Linearized_Axial_Symmetry_1,eq:Linearized_Axial_Symmetry_2,eq:Linearized_Axial_Symmetry_3} can be satisfied at a point, call $t_0$. Note that $\ad_Y$ always has a nontrivial kernel, because $0 \neq Y \in \ker (\ad_Y)$. Pick any nonzero element $T \in \ker (\ad_Y)$ and let $T_3 (t_0) = T$. As $- 1 \in \Spec \left( \ad_Y^2 \right)$, we can choose $T_1 (t_0)$ to be a $(- 1)$-eigenvector and let $T_2 (t_0) = \ad_Y \left( T_1 (t_0) \right)$. The Picard--Lindel\"of Theorem guarantees the existence of a (local) solution with these initial values. As $\tfrac{\rd}{\rd t} \cT$ is nonzero at $t = t_0$, $\cT$ is necessarily nonconstant. This concludes the proof.
\end{proof}

\smallskip

Using \cref{eq:Linearized_Axial_Symmetry_2}, $T_2$ can be eliminated from Nahm's equations. Since $[[Y, T_1], T_3] = [Y, [T_1, T_3]]$, the reduced system of ordinary differential equations, which we call the \emph{axially symmetric Nahm's equations}, has the form:
\begin{subequations}
\begin{align}
	\dot{T}_1	&= [Y, [T_1, T_3]], \label{eq:Axial_Symmetric_Nahm1} \\
	\dot{T}_3	&= [T_1, [Y, T_1]]. \label{eq:Axial_Symmetric_Nahm2}
\end{align}
\end{subequations}

\medskip

\subsection{Axially symmetric solutions to the $\mathbf{SU (3)}$ Nahm's equations}

We now illustrate how \Cref{theorem:Axial_Symmetry,lemma:negative_1} can be used to find axially symmetric solutions to Nahm's equations through the $n = 3$ case. Now $Y \in \su (3)$ and thus, after a change of basis, it can always be brought to the form
\begin{equation}
	Y_{\alpha, \beta} \coloneqq \begin{pmatrix}
									\alpha i & 0 & 0 \\
				 					0 & \beta i & 0 \\
									0 & 0 & - (\alpha + \beta) i \\
								\end{pmatrix}, \mbox{ with } \alpha, \beta \in \rl, \mbox{ and } \alpha \geqslant \beta \geqslant - (\alpha + \beta).
\end{equation}
By \Cref{lemma:negative_1}, to get nonconstant solutions, we need to have
\begin{equation}
	\beta = \alpha - 1, \quad \mbox{or} \quad \beta = \tfrac{1}{2} (1 - \alpha), \quad \mbox{or} \quad \beta = 1 - 2 \alpha.
\end{equation}
Let us write a generic element of $\su (3)$, say $T$, as
\begin{equation}
	T = \begin{pmatrix} a i & z_1 & z_2\\ -\overline{z_1} & b i & z_3\\ -\overline{z_2} & - \overline{z_3} & -(a + b) i \\ \end{pmatrix}, \mbox{ with } a, b \in \rl, \mbox{ and } z_1, z_2, z_3 \in \cx.
\end{equation}
Then we get
\begin{equation}
	\ad_{Y_{\alpha, \beta}}^2 (T) =
	\begin{pmatrix}
		0 & -(\alpha - \beta)^2 z_1 & -(2 \alpha + \beta)^2 z_2\\
		(\alpha - \beta)^2 \overline{z_1} & 0 & - (\alpha + 2 \beta)^2 z_3\\
		(2 \alpha + \beta)^2 \overline{z_2} & (\alpha + 2 \beta)^2 \overline{z_3} & 0\\
	\end{pmatrix}. \label{eq:ad2YT}
\end{equation}
Thus, \cref{eq:ad2YT} imply that, when no two of the diagonal elements of $Y_{\alpha, \beta}$ are equal, then the kernel is spanned by elements of the form $i \cdot \diag (a, b, - (a + b))$, and thus this is the most general form of $T_3$, in this case. When two of the diagonal elements are equal, then the kernel is 4-dimensional and isomorphic (as a Lie algebra) to $\rl \oplus \so (3)$.

\smallskip

\Cref{table:dim_-1_eigenspaces} below summarizes the cases in which the $(- 1)$-eigenspaces are nontrivial (with the requirement that $\alpha \geqslant \beta \geqslant - (\alpha + \beta)$).
\begin{table}[h!]
\begin{center}
	\begin{tabular}{|c|c|c|c|}
		\hline
		case \# & $(\alpha, \beta)$ & example(s) & $\dim_{\rl} \left( \ker \left( \ad_{Y_{\alpha, \beta}}^2 + \Id_{\su (n)} \right) \right)$ \\
		\hline
		1. & $\left( \alpha, \alpha - 1 \right), \: \alpha \in \left( \tfrac{2}{3}, 1 \right) \cup \left(1, \infty \right)$ & $\left( \tfrac{5}{6}, - \tfrac{1}{6} \right), \left( 2, 1 \right)$ & 2 \\
		2. & $\left( \alpha, 1 - 2 \alpha \right), \: \tfrac{1}{3} < \alpha < \tfrac{2}{3}$ & $\left( \tfrac{1}{2}, 0 \right)$ & 2 \\
		3. & $\left( \alpha, \tfrac{1}{2} (1 - \alpha) \right), \: \alpha \in \left( \tfrac{1}{3}, 1 \right) \cup \left( 1, \infty \right)$ & $\left( \tfrac{2}{3}, \tfrac{1}{6} \right), \left( 3, - 1 \right)$ & 2 \\
		4. & $\left( 1, 0 \right)$ & $\left( 1, 0 \right)$ & 4 \\
		5. & $\left( \tfrac{2}{3}, - \tfrac{1}{3} \right)$ & $\left( \tfrac{2}{3}, - \tfrac{1}{3} \right)$ & 4 \\
		6. & $\left( \tfrac{1}{3}, \tfrac{1}{3} \right)$ & $\left( \tfrac{1}{3}, \tfrac{1}{3} \right)$ & 4 \\
		\hline
	\end{tabular}
\end{center}
\caption{Dimensions of the $(- 1)$-eigenspaces.}
\label{table:dim_-1_eigenspaces}
\end{table}

In the cases 1., 2., and 3., the Ansatz has $2 + 2 = 4$ real parameters, and the axially symmetric Nahm's \cref{eq:Axial_Symmetric_Nahm1,eq:Axial_Symmetric_Nahm2} reduce to a system of four ordinary differential equation on four real functions. In the cases 4., 5., and 6., the Ans\"atze have $4 + 4 = 8$ parameters, and the axially symmetric Nahm's \cref{eq:Axial_Symmetric_Nahm1,eq:Axial_Symmetric_Nahm2} reduce to a system of eight ordinary differential equation on eight real functions.

\smallskip

We end this section by computing the solutions explicitly in a particular case.

\subsubsection*{Example: The $(\alpha, \beta) = \left( \tfrac{1}{2}, 0 \right)$ case}

In this example we show how our technique recovers the results of \cite{Dancer-NahmHyperKahler}*{Proposition~3.10}.

\smallskip

Let $Y \coloneqq Y_{\frac{1}{2},0}$, that is
\begin{equation}
Y =	\begin{pmatrix}
		\tfrac{i}{2} & 0 & 0 \\
		0 & 0 & 0 \\
		0 & 0 & - \tfrac{i}{2} \\
	\end{pmatrix}.
\end{equation}
Both the kernel and the $(- 1)$-eigenspace of $\ad_Y^2$ are 2-dimensional. More concretely, we can write our Ansatz as
\begin{equation}
	T_1 = \begin{pmatrix} 0 & 0 & z \\ 0 & 0 & 0 \\ - \overline{z} & 0 & 0 \end{pmatrix} \quad \& \quad T_3 = \begin{pmatrix} a i & 0 & 0 \\ 0 & - (a + b) i & 0 \\ 0 & 0 & b i \end{pmatrix},
\end{equation}
where $z$ is a complex function and $a$ and $b$ are real functions. Using the residual gauge symmetry, we can assume, without any loss of generality, that $z$ is, in fact, real. The axially symmetric Nahm's \cref{eq:Axial_Symmetric_Nahm1,eq:Axial_Symmetric_Nahm2} then become
\begin{subequations}
\begin{align}
	\dot{z}	&= (a - b) z, \label{eq:case2_eq1} \\
	\dot{a}	&= 2 z^2, \label{eq:case2_eq2} \\
	\dot{b}	&= - 2 z^2. \label{eq:case2_eq3}
\end{align}
\end{subequations}
Note that if one knows $a$ and $b$, then $z$ can be computed via \cref{eq:case2_eq1}. The conserved quantities in \cref{eq:CC1,eq:CC2,eq:CC3} are automatically zero. The other two, given by \cref{eq:CC4,eq:CC5}, are related and satisfy
\begin{equation}
	C_4 = C_5 = a^2 + b^2 + (a + b)^2 - 2 z^2.
\end{equation}
Furthermore, \cref{eq:case2_eq2,eq:case2_eq3} imply that the quantities
\begin{equation}
	k_1 \coloneqq a + b, \quad \mbox{and} \quad k_2 \coloneqq a^2 - k_1 a - z^2,
\end{equation}
are also conserved. Using \cref{eq:case2_eq2} we get
\begin{equation}
	\dot{a} = 2 \left( a^2 - k_1 a - k_2 \right) = 2 \left( a - \frac{1}{2} k_1 \right)^2 - 2 \left( k_2 + \frac{k_1^2}{4} \right). \label{eq:dot_a_eq}
\end{equation}
Let
\begin{equation}
	A \coloneqq 2 a - k_1 = a - b, \quad \mbox{and} \quad K \coloneqq - 4 k_2 - k_1^2.
\end{equation}
Then \cref{eq:dot_a_eq} is equivalent to
\begin{equation}
	\dot{A} = A^2 + K. \label{eq:Simplified_Eq_1}
\end{equation}
Let $c \in \mathbb{R}$ be the constant of integration, and then the solutions of \cref{eq:Simplified_Eq_1} are
\begin{equation}
	A (t) = \left\{ 
		\begin{array}{cc}
		- \sqrt{K} \cot \left( \sqrt{K} (t - c) \right) & K > 0, \\
		- (t - c)^{- 1} & K = 0, \\
		- \sqrt{- K} \coth \left( \sqrt{- K} (t - c) \right) & K < 0.
	\end{array} \right. \label{eq:axial_example}
\end{equation}
From this the functions $a, b$, and $z$, and thus the corresponding axially symmetric solution of Nahm's equations can easily be reconstructed.

\begin{remark}
	In each case of \cref{eq:axial_example}, the solutions develop singularities. When $K > 0$, the singularities are at the points
	\begin{equation}
		\left\{ \ c + \tfrac{\pi}{K} k \ \middle| \ k \in \Z \ \right\}.
	\end{equation}
	When $K \leqslant 0$, the only singularity is at $t = c$.

	According to \Cref{lemma:poles_and_representations} these singularities induce a representation of $\so (3)$. In all of the three cases of \cref{eq:axial_example} this representation is the (unique, up to isomorphism) irreducible, 3-dimensional representation.
\end{remark}

\bigskip

\section{Spherically symmetric solutions}
\label{sec:spherical}

In this section we consider spherically symmetric solutions to Nahm's equations. That is, solutions $\cT$, such that for all $A \in \SO (3)$ there exists $U_A \in \SU (n)$ such that $_{U_A}^A \cT = \cT$.

\smallskip

The next theorem is analogous to \Cref{theorem:Axial_Symmetry} as it gives an infinitesimal version of spherical symmetry.

\begin{theorem}
	\label{theorem:Spherical_Symmetry}
	If $\cT = \left( T_1, T_2, T_3 \right)$ is spherically symmetric and the corresponding $U$ function in \eqref{eq:A-equivariance} can be chosen to be continuously differentiable at the identity of $\SO (3)$, then there exists a triple, $\left( Y_1, Y_2, Y_3 \right) \in \su (n)$, such that for all $i, j \in \{ 1, 2, 3 \}$
	\begin{equation}
		[Y_i, T_j] = \sum\limits_{k = 1}^3 \epsilon_{ijk} T_k, \label{eq:Linearized_Spherical_Symmetry}
	\end{equation}
	Conversely, if \cref{eq:Linearized_Spherical_Symmetry} is satisfied, then $\cT$ is spherically symmetric.
\end{theorem}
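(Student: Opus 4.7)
The plan is to mirror the proof of \Cref{theorem:Axial_Symmetry}, linearizing the equivariance condition around $\theta = 0$ for the forward direction and integrating it along 1-parameter subgroups for the converse. The new ingredient, relative to the axial case, is having to deal with $\SO(3)$ rather than $\SO(2)$---but the needed structure reduces to three independent 1-parameter subgroups, plus surjectivity of the exponential map.

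\textbf{Forward direction.} Since $\SO(3)$-equivariance restricts to $A$-equivariance along each 1-parameter subgroup $A_i(\theta) \coloneqq \exp(\theta X_i)$ (for $i = 1, 2, 3$), and since the hypothesized regularity of $U$ at $\Id_3$ restricts to regularity of $\theta \mapsto U_{A_i(\theta)}$ at $\theta = 0$, I would set (WLOG $U_{\Id_3} = \Id_n$ by \Cref{rem:UA_properties})
\[
Y_i \coloneqq \frac{\rd}{\rd \theta} U_{A_i(\theta)} \bigg|_{\theta = 0} \in \su(n).
\]
Differentiating ${}^{A_i(\theta)} T_j = U_{A_i(\theta)}^{-1} T_j U_{A_i(\theta)}$ at $\theta = 0$ and using $(X_i)_{jk} = -\epsilon_{ijk}$, the LHS yields $-\sum_k \epsilon_{ijk} T_k$ and the RHS yields $-[Y_i, T_j]$, giving \cref{eq:Linearized_Spherical_Symmetry}. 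This step is essentially three independent copies of the axial linearization, so I do not expect any obstacle here.

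\textbf{Converse direction.} Given $Y_1, Y_2, Y_3$ satisfying \cref{eq:Linearized_Spherical_Symmetry}, for each $\xi = \sum_l \xi_l X_l \in \so(3)$ I would define $Y(\xi) \coloneqq \sum_l \xi_l Y_l$ and the 1-parameter curves
\[
A(t) \coloneqq \exp(t \xi) \in \SO(3), \qquad V(t) \coloneqq \exp(t Y(\xi)) \in \SU(n),
\]
and then show that $F(t) \coloneqq {}^{A(t)}_{V(t)} \cT$ is constant in $t$. At $t = 0$, the derivative computes to $\dot F_i(0) = \sum_j \xi_{ij} T_j + [Y(\xi), T_i]$, and substituting $\xi_{ij} = -\sum_l \xi_l \epsilon_{lij}$ makes the two summands cancel by \cref{eq:Linearized_Spherical_Symmetry}. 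For general $t_0$, I would use the 1-parameter group property $A(t_0 + s) = A(t_0) A(s)$ (and similarly for $V$) together with commutativity of the two actions to factor
\[
F(t_0 + s) = {}^{A(t_0)}_{V(t_0)} \left( {}^{A(s)}_{V(s)} \cT \right),
\]
so that $\dot{F}(t_0) = {}^{A(t_0)}_{V(t_0)} (\dot F(0)) = 0$. Hence $F \equiv \cT$; evaluating at $t = 1$ yields ${}^{\exp(\xi)} \cT = {}_{\exp(Y(\xi))^{-1}} \cT$, providing a valid choice $U_{\exp(\xi)} \coloneqq \exp(Y(\xi))$.

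\textbf{Main obstacle.} The only genuine subtlety is covering all of $\SO(3)$ rather than just the individual 1-parameter subgroups: this follows because $\SO(3)$ is compact and connected, so the exponential map $\so(3) \to \SO(3)$ is surjective. The resulting assignment $A \mapsto U_A$ may depend on the choice of $\xi$ with $\exp(\xi) = A$, but the theorem only asserts the existence of \emph{some} $U_A$ for each $A$, and does not require $A \mapsto U_A$ to be homomorphic or even continuous globally. As a related sanity check, applying the Jacobi identity to \cref{eq:Linearized_Spherical_Symmetry} only forces $[Y_i, Y_j] - \sum_k \epsilon_{ijk} Y_k$ to commute with every $T_l$, not to vanish outright, so $(Y_1, Y_2, Y_3)$ need not span a copy of $\su(2)$---which is consistent with the non-uniqueness of the chosen gauges.
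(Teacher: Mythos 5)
Your proposal is correct and follows essentially the same route as the paper: the forward direction is the linearization of the equivariance condition along the three coordinate-axis rotations (which the paper packages as three applications of \Cref{theorem:Axial_Symmetry}), and the converse integrates \cref{eq:Linearized_Spherical_Symmetry} along one-parameter subgroups by showing ${}^{A(t)}_{V(t)}\cT$ has vanishing derivative, then covers all of $\SO(3)$ via surjectivity of the exponential map. Your closing observation that the $Y_i$ need only satisfy the $\so(3)$ relations up to an element commuting with the $T_l$ is also consistent with the remark following the theorem in the paper.
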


\begin{definition}
	We call elements of the triple $\left( Y_1, Y_2, Y_3 \right)$ the \emph{generators of the spherically symmetry for $\cT$}.
\end{definition}

\begin{proof}
	Assume first that $\cT$ is spherically symmetric and the corresponding $U$ function in \eqref{eq:A-equivariance} can be chosen so that it is continuously differentiable at the identity of $\SO (3)$. In this case, $\cT$ is, in particular, axially symmetric around all of the three coordinate axes and the corresponding $U$ functions in \eqref{eq:A-equivariance} can be chosen so that they are continuously differentiable at the identity of $\SO (3)$, thus the first part of \Cref{theorem:Axial_Symmetry} can be applied. For each $i \in \{ 1, 2, 3 \}$, let $X_i \in \so (3)$ be the infinitesimal generator of the (positively oriented) rotation around the $i^\mathrm{th}$ axis, that is $(X_i)_{jk} = - \epsilon_{ijk}$, and let $Y_i$ be the corresponding generator of axial symmetry. Then the nine equations in \eqref{eq:Linearized_Spherical_Symmetry} are exactly the three triples of equations that one gets from these three different axial symmetries.

	\smallskip

	Now assume that \cref{eq:Linearized_Spherical_Symmetry} holds and let $A \in \SO (3)$. Due to the surjectivity of the exponential map $\exp : \so (3) \rightarrow \SO (3)$, there exists $X = a_1 X_1 + a_2 X_2 + a_3 X_3 \in \so (3)$, such that $A = \exp (X)$. Let $A (s) \coloneqq \exp (s X) \in \SO (3)$, $Y \coloneqq a_1 Y_1 + a_2 Y_2 + a_3 Y_3 \in \su (n)$, and $U (s) \coloneqq \exp (s Y)$. Clearly, $A (1) = A$, $A (0) = \Id_3$, and $U (0) = \Id_n$. Thus $_{U (0)}^{A (0)} \cT = \cT$. Next we show that $\cA (s) \coloneqq {}_{U (s)}^{A (s)} \cT - \cT$ is independent of $s$. Using
	\begin{equation}
		\frac{\rd}{\rd s} A (s) = X A (s) = A (s) X, \quad \mbox{and} \quad \frac{\rd}{\rd s} U (s) = Y U (s) = U (s) Y,
	\end{equation}
	we get
	\begin{equation}
		\frac{\rd}{\rd s} \cA (s) = {}_{U (s)}^{A (s)} \left( {}^X \cT + [Y, \cT] \right).
	\end{equation}
	Now the vanishing of the $s$-independent quantity in the parentheses is equivalent to \cref{eq:Linearized_Spherical_Symmetry}. Thus, $\cA$ is constant and since $\cA (0) = 0$, we have $\cA (1) = 0$, which is equivalent to $\cT$ being $A$-equivariant. Since $A \in \SO(3)$ was arbitrary, this concludes the proof.
\end{proof}

\medskip

\begin{remark}
	Note that \cref{eq:Linearized_Spherical_Symmetry} implies that for all $i, j, l \in \{ 1, 2, 3 \}$
	\begin{align}
		[[Y_i, Y_j], T_l]	&= [Y_i, [Y_j, T_l]] - [Y_j, [Y_i, T_l]] \\
			&= \sum\limits_{l = 1}^3 \left( \epsilon_{jlk} [Y_i, T_k] - \epsilon_{ilk} [Y_j, T_k] \right) \\
			&= \sum\limits_{k, m = 1}^3 \left( \epsilon_{jlk} \epsilon_{ikm} - \epsilon_{ilk} \epsilon_{jkm} \right) T_m \\
			&= \sum\limits_{m = 1}^3 \left( \delta_{ik} \delta_{jm} - \delta_{im} \delta_{jk} \right) T_m \\
			&= \sum\limits_{l, m = 1}^3 \epsilon_{ijk} \epsilon_{klm} T_m \\
			&= \sum\limits_{l = 1}^3 \epsilon_{ijk} [Y_k, T_l],
	\end{align}
	or, equivalently
	\begin{equation}
		[[Y_i, Y_j], \cT] = \sum\limits_{k = 1}^3 \epsilon_{ijk} [Y_k, \cT]. \label{eq:Y_commutators}
	\end{equation}
\end{remark}

\medskip

\subsection{Structure theorem for the spherically symmetric Ansatz}

In this section we prove a structure theorem for spherically symmetric solutions of Nahm's equations, under certain hypotheses. This structure theorem classifies spherically symmetric An\"atze through representation theoretic means. In order to set up the stage for this, let us begin with a remark.

\begin{remark}\label{remark:representations}
	If a triple $\left( Y_1, Y_2, Y_3 \right) \in \su (n)^{\oplus 3}$ satisfies the $\so (3)$ commutator relations
	\begin{equation}
		\forall i \in \{ 1, 2, 3 \}: \quad [Y_i, Y_j] = \sum\limits_{k = 1}^3 \epsilon_{ijk} Y_k, \label{eq:so3commutators}
	\end{equation}
	then \cref{eq:Y_commutators} holds independent of $\cT$. A compact way to rephrase \cref{eq:so3commutators} can be given as follows: consider the linear map from $\so (3)$ to $\su (n)$ that sends $X_i$ to $Y_i$. By \cref{eq:so3commutators}, this map gives a representation of $\so (3)$. Let us call it $(\cx^n, \rho)$. Let $\left( V_k, \rho_k \right)$ be the irreducible, $k$-dimensional complex representation of $\so (3)$ and let
	\begin{equation}
		\left( \hat{V}, \hat{\rho} \right) \coloneqq (\cx^n, \rho) \otimes \left( (\cx^n)^*, \rho^* \right) \otimes \left( V_3, \rho_3 \right).
	\end{equation}
	The values of $\cT$ can be viewed as elements of $\hat{V}$ (more precisely elements from $\su (n) \otimes V_3 \subsetneq \hat{V}$) and the action of any $X \in \so (3)$ on $\cT$ is given by
	\begin{equation}
		\hat{\rho} (X) (\cT) \coloneqq {}^X \cT + [\rho (X), \cT]. \label{eq:hat_rho}
	\end{equation}
	If $\cT$ is spherically symmetric with generators $\left( Y_1, Y_2, Y_3 \right)$, then \cref{eq:Linearized_Spherical_Symmetry} is equivalent to
	\begin{equation}
		\forall X \in \so (3): \quad \hat{\rho} (X) (\cT) = 0, \label{eq:spherical_symmetry_with_hat_rho}
	\end{equation}
	or, in other words, the values of $\cT$ lie in the trivial component of $\left( V, \hat{\rho} \right)$. Since only representation of $\so (3)$ that is both trivial and irreducible is the 1-dimensional one, \Cref{theorem:Spherical_Symmetry} implies the following: When \cref{eq:so3commutators} holds, $\cT$ is spherically symmetric exactly if $\cT$ takes values in the direct sum of the 1-dimensional irreducible components of $\left( \hat{V}, \hat{\rho} \right)$.
\end{remark}

\smallskip

\Cref{theorem:Spherical_Symmetry,remark:representations} tell us that certain spherically symmetric solutions to Nahm's equations are labeled by representations of $\so (3)$.

Let us recall the Clebsch--Gordan Theorem: For all positive integers, $m \geqslant n$, we have the following decomposition of representations
\begin{equation}
	\left( V_m, \rho_m \right) \otimes \left( V_n, \rho_n \right) \simeq \smashoperator{\bigoplus\limits_{k = 1}^n} \ \left( V_{m + n + 1 - 2 k}, \rho_{m + n + 1 - 2 k} \right). \label{eq:Clebsch--Gordan_Decomposition}
\end{equation}
The above \cref{eq:Clebsch--Gordan_Decomposition} implies that the representation $\left( V_m, \rho_m \right) \otimes \left( V_n, \rho_n \right) \otimes \left( V_3, \rho_3 \right)$ has a single 1-dimensional irreducible summand when $m = n \geqslant 2$, or $m = n + 2$, and none otherwise. When $m = n + 2$, let us call $B^n$ the unit length generator of the unique 1-dimensional representation in $\left( V_{n + 2}, \rho_{n + 2} \right) \otimes \left( V_n, \rho_n \right) \otimes \left( V_3, \rho_3 \right)$, which is well-defined, up to a $\rU (1)$ factor. Note, that $B^n$ can be viewed as an $\so (3)$-invariant triple of maps $(B^n_1,B^n_2,B^n_3)$ from the $n$-dimensional irreducible representation to the $(n + 2)$-dimensional one. With this in mind, let us state and prove our structure theorem.

\begin{theorem}[Structure theorem for spherically symmetric solutions to Nahm's equations]
\label{theorem:Str_Thm_for_Spherical_Symmetry}
	Let $\cT$ be a spherically symmetric solution to Nahm's \cref{eq:Nahm1,eq:Nahm2,eq:Nahm3} such that the generators induce the representation $(\cx^n, \rho)$ (as in \Cref{remark:representations}), and write the decomposition of $(\cx^n, \rho)$ into irreducible summands as
	\begin{equation}
		(\cx^n, \rho) \simeq \bigoplus\limits_{a = 1}^k \left( V_{n_a}, \rho_{n_a} \right). \label{eq:rho_decomposition}
	\end{equation}
	We can assume, without any loss of generality, that for all $a \in \{ 1, \ldots, k - 1 \}$, we have $n_a \geqslant n_{a + 1}$. For all $i \in \{ 1, 2, 3 \}$ and $a \in \{ 1, \ldots, k \}$ let
	\begin{align}
		Y_{i, a}	&\coloneqq \rho_{n_a} (X_i), \\
		Y_i			&\coloneqq \rho (X_i) = \diag \left( Y_{i, 1}, Y_{i, 2}, \ldots, Y_{i, k} \right).
	\end{align}
	Fix $t_0$ in the domain of $\cT$ and for all $i \in \{ 1, 2, 3 \}$, write $T_i (t_0)$ in a block matrix form at any point in the domain of $\cT$, according to \cref{eq:rho_decomposition}\emph{:}
	\begin{equation}
		T_i (t_0) = \begin{pmatrix}
			\left( T_i \right)_{11}	& \left( T_i \right)_{12}	& \ldots	& \left( T_i \right)_{1k} \\
			\left( T_i \right)_{21} & \left( T_i \right)_{22}	&			& \vdots \\
			\vdots	 				&							& \ddots	& \\
			\left( T_i \right)_{k1} & \ldots					&			& \left( T_i \right)_{kk}
		\end{pmatrix}, \label{eq:Spherically_Symmetric_Ansatz}
	\end{equation}
	with $\left( T_i \right)_{ab} = - \left( T_i \right)_{ba}^* \in V_{n_a} \otimes V_{n_b}^*$ and $\left( T_i \right)_{aa} \in \su \left( V_{n_a} \right)$. Then, up to a $\rho$-invariant gauge and for all $a, b \in \{ 1, 2, \ldots, k \}$, we have that
	\begin{enumerate}
		\item There exist $c_a \in \rl$, such that for all $i \in \{ 1, 2, 3 \}$, we have $\left( T_i \right)_{aa} = c_a Y_{i, a}$.
		\item If $a \neq b$ and $|n_a - n_b| \neq 2$, then $\left( T_i \right)_{ab} = 0$.
		\item If $n_a = n_b + 2$, then there exists $c_{ab} \in \cx$, such that for all $i \in \{ 1, 2, 3 \}$, we have $\left( T_i \right)_{ab} = c_{ab} B_i^{n_a}$.\footnote{Recall that $B^n$ is a unit length generator of the unique 1-dimensional representation in $\left( V_{n + 2}, \rho_{n + 2} \right) \otimes \left( V_n, \rho_n \right) \otimes \left( V_3, \rho_3 \right)$.}
	\end{enumerate}
\end{theorem}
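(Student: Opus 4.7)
The plan is to combine \Cref{theorem:Spherical_Symmetry} with the Clebsch--Gordan decomposition \eqref{eq:Clebsch--Gordan_Decomposition} and a residual gauge argument. By \Cref{remark:representations}, spherical symmetry is equivalent to $\cT (t_0)$ lying in the sum of the one-dimensional trivial $\so (3)$-summands of $\hat{V}$. Decomposing $\hat{V}$ via \eqref{eq:rho_decomposition} and using the self-duality $V_k^* \simeq V_k$ of the irreducible $\so (3)$-representations, one obtains
\begin{equation}
	\hat{V} \simeq \bigoplus\limits_{a, b = 1}^k V_{n_a} \otimes V_{n_b} \otimes V_3,
\end{equation}
which reduces the problem to analyzing each block separately.

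A trivial summand appears in $V_m \otimes V_n \otimes V_3$ precisely when $V_3 \subseteq V_m \otimes V_n$, which by \eqref{eq:Clebsch--Gordan_Decomposition} holds if and only if $m \equiv n \pmod 2$ and $|m - n| \leqslant 2 \leqslant m + n - 2$, i.e., iff either $m = n \geqslant 2$ or $|m - n| = 2$. This immediately yields claim (2) for the off-diagonal blocks with $|n_a - n_b| \notin \{ 0, 2 \}$. Next one identifies the canonical generator of the unique one-dimensional summand in each surviving block. For $n_a = n_b + 2$ the generator is $B^{n_a}$ by the definition given just before the theorem, so the corresponding piece of $\cT$ has the form $c_{ab} B_i^{n_a}$, establishing (3). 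For $n_a = n_b \geqslant 2$, the trivial line inside $V_{n_a} \otimes V_{n_a}^* \otimes V_3 \simeq \End (V_{n_a}) \otimes V_3$ is spanned by the triple $(Y_{1, a}, Y_{2, a}, Y_{3, a})$, since the essentially unique $\so (3)$-equivariant map $V_3 \simeq \so (3) \to \End (V_{n_a})$ is $\rho_{n_a}$ itself. Hence $(T_i)_{ab} = c_{ab}^{(0)} Y_{i, a}$ for some scalar $c_{ab}^{(0)} \in \cx$ whenever $n_a = n_b$, with $a = b$ included (and the case $n_a = 1$ contributing trivially).

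The remaining step is the $\rho$-invariant gauge argument, which I expect to be the main subtlety. Anti-Hermiticity of the $T_i$'s, combined with $Y_{i, a}^* = - Y_{i, a}$, forces the matrix $C^p$ whose entries are the scalars $c_{ab}^{(0)}$ (indexed by pairs with $n_a = n_b = p$) to be Hermitian. By Schur's lemma, the centralizer of $\rho$ in $\SU (n)$ is, up to the center, $\prod_p \rU (k_p)$, where $k_p$ is the multiplicity of $V_p$ in the decomposition \eqref{eq:rho_decomposition}. This group acts on each $C^p$ by conjugation, and because the same Hermitian matrix $C^p$ controls all three $T_i$, a single $\rho$-invariant gauge transformation simultaneously diagonalizes them all; this kills the off-diagonal entries (completing (2)) and leaves real diagonal entries, which become the $c_a$ of (1). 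The main remaining obstacle is the explicit identification of the canonical generators of the trivial lines as the specific $Y_{i, a}$ and $B_i^{n_a}$ of the theorem, since dimension counting alone only yields existence and uniqueness up to scalar.
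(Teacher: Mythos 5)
Your proposal is correct and follows essentially the same route as the paper: reduce to the trivial isotypic component of $\hat{V}$ via \Cref{theorem:Spherical_Symmetry} and \Cref{remark:representations}, apply Clebsch--Gordan blockwise to locate the one-dimensional summands, and use a $\rho$-invariant gauge to handle repeated irreducible factors. Your final step is in fact spelled out in more detail than the paper's (which merely asserts the existence of the gauge transformation), since the Hermiticity of the coefficient matrix $C^p$ and its simultaneous unitary diagonalization for all three $T_i$ is exactly what justifies killing all equal-dimension off-diagonal blocks at once.
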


\smallskip

\begin{remark}
\label{remark:spherical_reducibility}
	An important question in any gauge theory is the reducibility of solutions. If $\cT$ is a spherically symmetric Nahm datum, then its decomposition into irreducible components is easy to read from \cref{eq:Spherically_Symmetric_Ansatz}: $\cI \coloneqq \{ 1, \ldots, k \}$, and define a relation on $\cI$ via $a \sim_0 b$ exactly when there exists $i \in \{ 1, 2, 3 \}$, such that $\left( T_i \right)_{ab} \neq 0$. Note that $\sim_0$ is reflexive and symmetric (but need not be transitive). Thus it generates an equivalence relation on $I$ which we denote by $\sim$. The set of irreducible components of $\cT$ is then labeled by $\cJ \coloneqq \cI / \sim$ and the component corresponding to $\{ a_1, a_2, \ldots, a_l \} \in \cJ$ is given by
	\begin{equation}
		\left( \cT_{\{ a_1, a_2, \ldots, a_l \}} \right)_i (t_0) \coloneqq \begin{pmatrix}
			\left( T_i \right)_{a_1 a_1}	& \left( T_i \right)_{a_1 a_2}	& \ldots	& \left( T_i \right)_{a_1 a_l} \\
			\left( T_i \right)_{a_2 a_1}	& \left( T_i \right)_{a_2 a_2}	&			& \vdots \\
			\vdots	 						&								& \ddots	& \\
			\left( T_i \right)_{a_l a_1}	& \ldots						&			& \left( T_i \right)_{a_l a_l}
		\end{pmatrix}.
	\end{equation}
	In particular, we have the following:
	\begin{enumerate}
		
		\item The third bullet point in \Cref{theorem:Str_Thm_for_Spherical_Symmetry} implies that if the set $\{ n_1, n_2, \ldots, n_k \}$ contains both even and odd numbers, then $\cT$ is reducible.

		\item If $k = 2$, $n_1 = n_2 + 2$, and for some $i \in \{ 1, 2, 3 \}$, $\left( T_i \right)_{12} \neq 0$, then $\cT$ is irreducible.

	\end{enumerate}
\end{remark}

\smallskip

\begin{remark}
	The $k = 1$ case in \Cref{theorem:Str_Thm_for_Spherical_Symmetry}, that is when $(\cx^n, \rho)$ is irreducible, was studied, albeit for low ranks only, by Dancer in \cite{Dancer-SU3monopoles}. This solution is also equivalent to the one found in the $K = 0$ case of \cref{eq:axial_example}.
\end{remark}

\smallskip

\begin{proof}[Proof of \Cref{theorem:Str_Thm_for_Spherical_Symmetry}]
	Let us begin with the $k = 1$ case, that is when $(\cx^n, \rho)$ is irreducible. By \Cref{remark:representations} the value of $\cT$ at any point is an element in the trivial component of $\left( \hat{V}, \hat{\rho} \right)$. Then the Clebsch--Gordan decomposition, \cref{eq:Clebsch--Gordan_Decomposition}, tells us that there is a unique 1-dimensional trivial summand in $\left( \hat{V}, \hat{\rho} \right)$, thus if we find one Ansatz, then it is the most general one. Let $\rho (X_i) \coloneqq Y_i$. Note that if for all $i \in \{ 1, 2, 3 \}$ we have $T_i = f Y_i$ at any point of the domain of $\cT$, then we have that for all, and it is a spherically symmetric Ansatz. This completes the proof in the $k = 1$ case.

	Let us turn to the general case. Regard $\cT$ as an element of $\hat{V}$ via evaluating it at $t_0$. Recall that $\so (3)$ acts on $V$ via (the extension of)
	\begin{equation}
		\hat{\rho} : \so (3) \otimes \hat{V} \rightarrow \hat{V}; \quad X \otimes \cR \mapsto {}^X \cR + [\rho (X), \cR].
	\end{equation}
	By \Cref{theorem:Spherical_Symmetry,remark:representations}, $\cT$ is spherically symmetric if \cref{eq:spherical_symmetry_with_hat_rho} holds, or, in other words, if $\cT$ takes values in the 1-dimensional irreducible summands of $\left( \hat{V}, \hat{\rho} \right)$. We can write the decomposition of $\left( \hat{V}, \hat{\rho} \right)$ into irreducible components as
	\begin{equation}
		\left( \hat{V}, \hat{\rho} \right) = \bigoplus\limits_{a, b = 1}^k \left( V_{n_a} \otimes V_{n_b} \otimes V_3, \rho_{n_a} \otimes \rho_{n_b} \otimes \rho_3 \right).
	\end{equation}
	Since $\cT$ takes values in $\hat{V}$, we get \cref{eq:Spherically_Symmetric_Ansatz}. Moreover the values of $\cT$ are in fact $\su (n) \otimes V_3$, we get $\left( T_i \right)_{ab} = - \left( T_i \right)_{ba}^* \in V_{n_a} \otimes V_{n_b}^*$ and $\left( T_i \right)_{aa} \in \mathfrak{u} \left( V_{n_a} \right)$. Using \cref{eq:Linearized_Spherical_Symmetry} we also see that $\left( T_i \right)_{aa}$ is traceless, thus $\left( T_i \right)_{aa} \in \su \left( V_{n_a} \right)$.

	Let $\left( \hat{V}_0, \hat{\rho}_0 \right)$ be the direct sum of 1-dimensional irreducible components in $\left( \hat{V}, \hat{\rho} \right)$. By the proof of the $k = 1$ case, each summand of the diagonal terms has a unique copy of $\left( V_0, \rho_0 \right)$, proving the first bullet point. Furthermore, each off-diagonal term also has unique copy, if and only if $|n_b - n_c| = 2$, proving the second bullet point. Finally, one can easily verify that when $n_a = n_b$ for some $a \neq b$, then there exists $U \in \SU (n)$, such that $[U, Y] = 0$, $U$ only acts nontrivially on the $\left( V_{n_a} \oplus V_{n_b}, \rho_{n_a} \oplus \rho_{n_b} \right)$ summand, and the $ab$-component of ${}_U \cT$ is zero. This completes the proof.
\end{proof}

\smallskip

\begin{remark}
	A weaker version of \Cref{theorem:Str_Thm_for_Spherical_Symmetry} was stated, without rigorous proof, in \cite{BCGPS83}.
\end{remark}

\smallskip

Next we analyze the off-diagonal terms in \cref{eq:Spherically_Symmetric_Ansatz}. As before, let $(X_1, X_2, X_3)$ be a standard basis of $\so (3)$, that is for all $i, j, k \in \{ 1, 2, 3 \}$, we have $(X_i)_{jk} = - \epsilon_{ijk}$.

\smallskip

\begin{theorem}
	\label{theorem:matrix_identities}
	Fix $n \in \N_+$, and let $\left( V_{n + 2}, \rho_{n + 2} \right)$ and $\left( V_n, \rho_n \right)$ be as above. For all $i \in \{ 1, 2, 3 \}$, let $Y_i^+ \coloneqq \rho_{n + 2} (X_i)$, $Y_i^- \coloneqq \rho_n (X_i)$, and $Y_i \coloneqq \diag \left( Y_i^+, Y_i^- \right)$. Let $\left( B_1^n, B_2^n, B_3^n \right)$ be the (unique up to a $\rU (1)$ factor) triple in the (unique) 1-dimensional irreducible component of
	\begin{equation}
		\left( \hat{V}, \hat{\rho} \right) \coloneqq \left( V_{n + 2} \otimes V_n \otimes V_3, \rho_{n + 2} \otimes \rho_n \otimes \rho_3 \right).
	\end{equation}
	Then, after a potential rescaling\footnote{That is, replacing $B^n$ with $\lambda B^n$, for some $\lambda \in \cx - \{ 0 \}$.}, we have the following identities (for all $j, k \in \{ 1, 2, 3 \}$, where it applies):
	\begin{subequations}
	\begin{align}
		Y_j^+ B_k^n - B_k^n Y_j^-									&= \sum\limits_{i = 1}^3 \epsilon_{ijk} B_i^n, \label{eq:B_i_def} \\
		\sum\limits_{i = 1}^3 B_i^n \left( B_i^n \right)^* 			&= \Id_{n + 2}, \label{eq:B_norm_1} \\
		\sum\limits_{i = 1}^3 \left( B_i^n \right)^* B_i^n 			&= \frac{n + 2}{n} \Id_n, \label{eq:B_norm_2} \\
		B_j^n \left( B_k^n \right)^* - B_k^n \left( B_j^n \right)^*	&= - \frac{2}{n + 1} \sum\limits_{i = 1}^3 \epsilon_{ijk} Y_i^+, \label{eq:YB_relations_1} \\
		\left( B_j^n \right)^* B_k^n - \left( B_k^n \right)^* B_j^n	&= \frac{2 (n + 2)}{n (n + 1)} \sum\limits_{i = 1}^3 \epsilon_{ijk} Y_i^-, \label{eq:YB_relations_2} \\
		Y_j^+ B_k^n - Y_k^+ B_j^n									&= \frac{n + 3}{2} \sum\limits_{i = 1}^3 \epsilon_{ijk} B_i^n, \label{eq:YB_relations_3} \\
		B_j^n Y_k^- - B_k^n Y_j^-									&= - \frac{n - 1}{2} \sum\limits_{i = 1}^3 \epsilon_{ijk} B_i^n. \label{eq:YB_relations_4}
	\end{align}
	\end{subequations}
\end{theorem}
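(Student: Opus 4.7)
I plan to reduce each identity to a statement about $\so (3)$-equivariant operators lying in a small invariant subspace, apply Schur's lemma to force proportionality to a single generator, and then pin down the remaining scalar constants via trace and Casimir computations. First, \eqref{eq:B_i_def} is essentially the defining equivariance of $B^n$: viewing $B^n$ as the unique (up to scalar, by the Clebsch--Gordan decomposition \cref{eq:Clebsch--Gordan_Decomposition}) $\so (3)$-invariant map $V_n \otimes V_3 \rightarrow V_{n + 2}$ and expanding its equivariance in components using $(X_j)_{kl} = - \epsilon_{jkl}$ gives \eqref{eq:B_i_def} after rearrangement via the cyclic symmetry of $\epsilon$.

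Next, a direct calculation using \eqref{eq:B_i_def} together with its adjoint $B_k^* Y_j^+ = Y_j^- B_k^* - \sum_i \epsilon_{ijk} B_i^*$ shows that $\sum_i B_i^n (B_i^n)^* \in \End (V_{n + 2})$ commutes with every $Y_j^+$; by Schur's lemma it is a positive real multiple of $\Id_{V_{n + 2}}$. Rescaling $B^n$ so that this scalar equals $1$ establishes \eqref{eq:B_norm_1} (the residual $\rU (1)$ phase freedom does not affect any of the sesquilinear quantities below). The same Schur argument applied on $V_n$ shows that $\sum_i (B_i^n)^* B_i^n$ is a scalar, and equating the traces $\tr (B_i B_i^*) = \tr (B_i^* B_i)$ with $\dim V_{n + 2} = n + 2$ and $\dim V_n = n$ forces that scalar to be $(n + 2)/n$, giving \eqref{eq:B_norm_2}.

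For \eqref{eq:YB_relations_1} and \eqref{eq:YB_relations_2}, observe that $B_j B_k^* - B_k B_j^*$ is skew-Hermitian and $\so (3)$-equivariant in $(j, k) \in \Lambda^2 V_3 \simeq V_3$; since the skew-Hermitian part of the $V_3$-isotypic component of $\End (V_{n + 2})$ is the $3$-dimensional span of $\{ Y_i^+ \}$, Schur forces $B_j B_k^* - B_k B_j^* = \alpha \sum_i \epsilon_{ijk} Y_i^+$ for a single $\alpha \in \rl$, and analogously $B_j^* B_k - B_k^* B_j = \gamma \sum_i \epsilon_{ijk} Y_i^-$ for some $\gamma \in \rl$. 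Identities \eqref{eq:YB_relations_3} and \eqref{eq:YB_relations_4} follow by the same Schur argument, because the antisymmetric parts of $(j, k) \mapsto Y_j^+ B_k$ and $(j, k) \mapsto B_j Y_k^-$ lie in the $1$-dimensional invariant subspaces spanned by $\sum_i \epsilon_{ijk} B_i^n$.

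The main obstacle, and the computational crux, is fixing the four scalar constants. The constants in \eqref{eq:YB_relations_3} and \eqref{eq:YB_relations_4} fall out of a Casimir computation: expanding $\sum_i (Y_i^+)^2 B_l$ using \eqref{eq:B_i_def} twice, comparing with the Casimir scalar $-\tfrac{(n + 1)(n + 3)}{4}$ on $V_{n + 2}$, and simplifying via the standard identity $\sum_l \epsilon_{abl} \epsilon_{cdl} = \delta_{ac} \delta_{bd} - \delta_{ad} \delta_{bc}$ produces $\sum_{i, j} \epsilon_{ijk} B_i^n Y_j^- = -\tfrac{n - 1}{2} B_k^n$; this rearranges to \eqref{eq:YB_relations_4}, and then \eqref{eq:YB_relations_3} follows by combining with \eqref{eq:B_i_def}. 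To determine $\alpha$ and $\gamma$, I would multiply \eqref{eq:B_i_def} on the right by $B_k^*$, sum over $k$, and use \eqref{eq:B_norm_1} to obtain $\sum_k B_k Y_j^- B_k^* = (1 + \alpha) Y_j^+$. Pairing this (and an analogous relation on $V_n$) with $Y_j^\pm$ via the trace, and using the Schur consequence $\tr (B_i^* B_j) = \tfrac{n + 2}{3} \delta_{ij}$ together with the Casimir scalars on $V_{n + 2}$ and $V_n$, produces a $2 \times 2$ linear system in $(\alpha, \gamma)$ whose unique solution is $\alpha = -\tfrac{2}{n + 1}$ and $\gamma = \tfrac{2 (n + 2)}{n (n + 1)}$. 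Throughout, the principal difficulty is the careful bookkeeping of $\epsilon$-contractions and sign conventions.
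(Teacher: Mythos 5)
Your overall architecture is the same as the paper's: take \eqref{eq:B_i_def} as the defining equivariance of $B^n$, use Schur's lemma plus a rescaling and a trace count for \eqref{eq:B_norm_1}--\eqref{eq:B_norm_2}, use Schur again to reduce \eqref{eq:YB_relations_1}--\eqref{eq:YB_relations_4} to four unknown scalars, and pin those down by Casimir and trace computations. Your determination of the constants in \eqref{eq:YB_relations_3}--\eqref{eq:YB_relations_4} --- expanding $\sum_j (Y_j^+)^2 B_k$ with \eqref{eq:B_i_def} twice and comparing with the Casimir scalars --- is correct, and is in fact slightly more direct than the paper's route, which instead derives the quadratic relations $(\beta_n^\pm)^2 = \beta_n^\pm - C_n^\pm$ and combines them with $\beta_n^+ + \beta_n^- = 2$.

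The one step that fails as written is the determination of $\alpha$ and $\gamma$. Tracing your two relations against $Y_j^+$ and $Y_j^-$ respectively gives
\begin{equation}
	\sum_{j, k} \tr \left( B_k Y_j^- B_k^* Y_j^+ \right) = (1 + \alpha)(n + 2) C_n^+ \quad \mbox{and} \quad \sum_{j, k} \tr \left( B_k^* Y_j^+ B_k Y_j^- \right) = \left( (n + 2) + \gamma n \right) C_n^-,
\end{equation}
where $C_n^\pm$ are the Casimir scalars on $V_{n+2}$ and $V_n$; but the two left-hand sides coincide by cyclicity of the trace, so your ``$2 \times 2$ system'' has rank one. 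It yields only the single relation $(1 + \alpha)(n + 2) C_n^+ = ((n + 2) + \gamma n) C_n^-$, which the claimed values satisfy but which does not determine them, and the quantity $\tr (B_i^* B_j) = \tfrac{n + 2}{3} \delta_{ij}$ does not supply the missing equation. The gap is repairable with ingredients you already have: evaluate the common mixed trace by substituting $Y_j^+ B_k = B_k Y_j^- + \sum_i \epsilon_{ijk} B_i$ once, which gives $\sum_{j, k} \tr ( Y_j^+ B_k Y_j^- B_k^* ) = (n + 2)(C_n^- + \beta_n^-)$ with $\beta_n^- = - \tfrac{n - 1}{2}$ already known from your Casimir step; this fixes $\alpha$, and then $\gamma$ follows from the rank-one relation. (The paper sidesteps the issue by contracting at the operator level, $\alpha_n^+ C_n^+ \Id_{n + 2} = \sum_i Y_i^+ \widetilde{Y}_i^+ = \sum_k \widetilde{B}_k^+ (B_k^n)^* = \beta_n^+ \Id_{n + 2}$ and its analogue on $V_n$, which couples the $\alpha$'s directly to the already-determined $\beta$'s without any trace.)
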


\begin{proof}
	\Cref{eq:B_i_def} is the defining equation for $\left( B_1^n, B_2^n, B_3^n \right)$, hence needs no proof.

	Using \cref{eq:B_i_def}, for all $j \in \{ 1, 2, 3 \}$ we get that
	\begin{align}
		\left[ Y_j^+, \sum\limits_{i = 1}^3 B_i^n \left( B_i^n \right)^* \right]	&= 0, \\
		\left[ Y_j^-, \sum\limits_{i = 1}^3 \left( B_i^n \right)^* B_i^n \right]	&= 0,
	\end{align}
	thus, by Schur's Lemma, we have there are real numbers, say $a_n$ and $b_n$, such that
	\begin{align}
		\sum\limits_{i = 1}^3 B_i^n \left( B_i^n \right)^*	&= a_n \Id_{n + 2}, \\
		\sum\limits_{i = 1}^3 \left( B_i^n \right)^* B_i^n	&= b_n \Id_n.
	\end{align}
	By taking traces we can conclude that both $a_n$ and $b_n$ are positive. Since $\left( B_1^n, B_2^n, B_3^n \right)$ can be rescaled by a nonzero constant, we can achieve $a_n = 1$. Again by the properties of the trace, we get $b_n = \tfrac{n + 2}{n}$. This proves \cref{eq:B_norm_1,eq:B_norm_2}.

	For all $i \in \{ 1, 2, 3 \}$ let
	\begin{align}
		\widetilde{Y}_i^+	&\coloneqq \sum\limits_{j, k = 1}^3 \epsilon_{ijk} B_j^n \left( B_k^n \right)^*, \label{eq:tildeY_i^+_def} \\
		\widetilde{Y}_i^-	&\coloneqq \sum\limits_{j, k = 1}^3 \epsilon_{ijk} \left( B_j^n \right)^* B_k^n, \label{eq:tildeY_i^-_def} \\
		\widetilde{B}_i^+	&\coloneqq \sum\limits_{j, k = 1}^3 \epsilon_{ijk} Y_j^+ B_k^n, \label{eq:tildeB_i^+_def} \\
		\widetilde{B}_i^-	&\coloneqq \sum\limits_{j, k = 1}^3 \epsilon_{ijk} B_j^n Y_k^-. \label{eq:tildeB_i^-_def}
	\end{align}
	Using \cref{eq:B_i_def}, we get that for all $i, j \in \{ 1, 2, 3 \}$
	\begin{align}
		\left[ Y_i^+, \widetilde{Y}_j^+ \right]								&= \sum\limits_{k = 1}^3 \epsilon_{ijk} \widetilde{Y}_k^+, \\
		\left[ Y_i^-, \widetilde{Y}_j^- \right]								&= \sum\limits_{k = 1}^3 \epsilon_{ijk} \widetilde{Y}_k^-, \\
		Y_i^+ \widetilde{B}_j^\pm - \widetilde{B}_j^\pm Y_i^-	&= \sum\limits_{k = 1}^3 \epsilon_{ijk} \widetilde{B}_k^\pm.
	\end{align}
	Thus, by the proof of the $k = 1$ case in \Cref{theorem:Str_Thm_for_Spherical_Symmetry} and the uniqueness (up to scale) of $\left( B_1^n, B_2^n, B_3^n \right)$, we get that that there are real numbers $\alpha_n^\pm$ and $\beta_n^\pm$, such that for all $i \in \{ 1, 2, 3 \}$
	\begin{align}
		\widetilde{Y}_i^\pm	&= \alpha_n^\pm Y_i^\pm, \\
		\widetilde{B}_i^\pm	&= \beta_n^\pm B_i^n.
	\end{align}
	Next we four, independent, linear equations on $\left( \alpha_n^+, \alpha_n^-, \beta_n^+, \beta_n^- \right) \in \rl^4$ to be able to (uniquely) solve for them. Let
	\begin{align}
		C_n^+ \Id_{n + 2}	&\coloneqq \sum\limits_{i = 1}^3 \left( Y_i^+ \right)^2 = - \frac{(n + 1)(n + 3)}{4} \Id_{n + 2}, \label{eq:Casimir+} \\
		C_n^- \Id_n			&\coloneqq \sum\limits_{i = 1}^3 \left( Y_i^- \right)^2 = - \frac{n^2 - 1}{4} \Id_n, \label{eq:Casimir-}
	\end{align}
	be the Casimir operators corresponding to the two irreducible representations. Using \cref{eq:tildeY_i^+_def,eq:tildeY_i^-_def,eq:Casimir+,eq:Casimir-}, we get
	\begin{equation}
		\alpha_n^+ C_n^+ \Id_{n + 2} = \sum\limits_{i = 1}^3 Y_i^+ (\alpha_n^+ Y_i^+) = \sum\limits_{i = 1}^3 Y_i^+ \widetilde{Y}_i^+ = \sum\limits_{i, j, k = 1}^3 \epsilon_{ijk} Y_i^+ B_j^n \left( B_k^n \right)^* = \sum\limits_{k = 1}^3 \widetilde{B}_k^+ \left( B_k^n \right)^* = \beta_n^+ \Id_{n + 2},
	\end{equation}
	which gives
	\begin{equation}
		\alpha_n^+ C_n^+ - \beta_n^+ = 0. \label{eq:system_1}
	\end{equation}
	Similarly
	\begin{equation}
		\alpha_n^- C_n^- \Id_n = \sum\limits_{k = 1}^3 (\alpha_n^- Y_k^-) Y_k^- = \sum\limits_{k = 1}^3 \widetilde{Y}_k^- Y_k^- = \sum\limits_{i, j, k = 1}^3 \epsilon_{ijk} \left( B_i^n \right)^* B_j^n Y_k^- = \sum\limits_{i = 1}^3 \left( B_i^n \right)^* \widetilde{B}_i^- = \beta_n^- \frac{n + 2}{n} \Id_n,
	\end{equation}
	so, we get
	\begin{equation}
		\alpha_n^- C_n^- - \frac{n + 2}{n} \beta_n^- = 0. \label{eq:system_2}
	\end{equation}
	Next, adding up \cref{eq:B_i_def,eq:tildeB_i^+_def,eq:tildeB_i^-_def}, and using \cref{eq:B_i_def} yields
	\begin{equation}
		\beta_n^+ + \beta_n^- = 2. \label{eq:system_3}
	\end{equation}
	Let us define
	\begin{equation}
		B \coloneqq \sum\limits_{i = 1}^3 Y_i^+ B_i^n = \sum\limits_{i = 1}^3 B_i^n Y_i^-. \label{eq:B_def}
	\end{equation}
	Using \cref{eq:B_i_def}, we get $Y_i^+ B - B Y_i^- = 0$ for all $i \in \{ 1, 2, 3 \}$, and thus, by Schur's Lemma, $B = 0$. Using this and \cref{eq:tildeB_i^+_def}, for all $i \in \{ 1, 2, 3 \}$ we have
	\begin{align}
		\left( \beta_n^+ \right)^2 B_i^n	&= \sum\limits_{j, k = 1}^3 \epsilon_{ijk} Y_j^+ \left( \beta_n^+ B_k^n \right) \\
											&= \sum\limits_{j, k, l, m = 1}^3 \epsilon_{ijk} \epsilon_{klm} Y_j^+ Y_l^+ B_m^n \\
											&= \sum\limits_{j, l, m = 1}^3 \left( \delta_{il} \delta_{jm} - \delta_{im} \delta_{jl} \right) Y_j^+ Y_l^+ B_m^n \\
											&= \sum\limits_{j = 1}^3 \left( Y_j^+ Y_i^+ B_j^n - Y_j^+ Y_j^+ B_i^n \right) \\
											&= \left( \sum\limits_{j, k = 1}^3 \epsilon_{jik} Y_k^+ B_j^n \right) + Y_i^+ \left( \sum\limits_{j = 1}^3 Y_j^+ B_j^n \right) - C_n^+ B_i^n \\
											&= \beta_n^+ B_i + 0 - C_n^+,
	\end{align}
	thus we get
	\begin{equation}
		\left( \beta_n^+ \right)^2 = \beta_n^+ - C_n^+. \label{eq:beta_quad_+}
	\end{equation}
	An analogous computation gives
	\begin{equation}
		\left( \beta_n^- \right)^2 = \beta_n^- - C_n^-. \label{eq:beta_quad_-}
	\end{equation}
	Subtracting \cref{eq:beta_quad_-} from \cref{eq:beta_quad_+}, and using \cref{eq:system_3} gives
	\begin{equation}
		\left( \beta_n^+ \right)^2 - \left( \beta_n^- \right)^2 = \left( \beta_n^+ + \beta_n^- \right) \left( \beta_n^- - \beta_n^- \right) = 2 \left( \beta_n^- - \beta_n^- \right) = \left( \beta_n^- - \beta_n^- \right) - C_n^+ + C_n^-,
	\end{equation}
	and thus
	\begin{equation}
		\beta_n^+ - \beta_n^- = C_n^- - C_n^+ = \frac{n^2 + 4 n + 3 - (n^2 - 1)}{4} = 2 (n + 2). \label{eq:system_4}
	\end{equation}
	The \cref{eq:system_1,eq:system_2,eq:system_3,eq:system_4} form a set of four independent linear equations for four unknowns, so the solution is unique:
	\begin{equation}
		\alpha_n^+ = - \frac{2}{n + 1}, \quad \alpha_n^- = \frac{2 (n + 2)}{n (n + 1)}, \quad \beta_n^+ = \frac{n + 3}{2}, \quad \beta_n^- = - \frac{n - 1}{2}.
	\end{equation}
	This completes the proof of \cref{eq:YB_relations_1,eq:YB_relations_2,eq:YB_relations_3,eq:YB_relations_4}.
\end{proof}

\medskip

\subsection{Ans\"atze and solutions}
In the next theorem, using \Cref{theorem:matrix_identities} we investigate a generalization of the irreducible solutions defined in the second bullet point of \Cref{remark:spherical_reducibility}.

\begin{theorem}
	\label{theorem:long_chains}
	Let $n, k \geqslant 1$, and suppose that $\left( V, \rho \right)$ decomposes as
	\begin{equation}\label{eq:Vrho_decomp}
	\begin{aligned}
		\left( V, \rho \right) \simeq \smashoperator{\bigoplus\limits_{a = 0}^k} \left( V_{n + 2 k - 2 a}, \rho_{n + 2 k - 2 (a - 1)} \right) =\left( V_{n + 2 k}, \rho_{n + 2 k} \right) \oplus \ldots \oplus \left( V_{n + 2}, \rho_{n + 2} \right) \oplus \left( V_n, \rho_n \right).
	\end{aligned}
	\end{equation}
	For all $m \geqslant 1$ and $i \in \{ 1, 2, 3 \}$, let $Y_i^m \coloneqq \rho_m (X_i)$ and $\left( B_1^m, B_2^m, B_3^m \right)$ as in \Cref{theorem:matrix_identities}.

	Then the spherically symmetric Ansatz given in \cref{eq:Spherically_Symmetric_Ansatz} takes the following form: there exists real, analytic functions $f_0, f_1, \ldots, f_k, g_0, g_1, \ldots, g_{k - 1}$ on the domain of $\cT$ such that for all $i \in \{ 1, 2, 3 \}$ we have
	\begin{equation}
		T_i = \begin{pmatrix}
			f_k Y_i^{n + 2 k} 									& g_{k - 1} B_i^{n + 2 (k - 1)}						& 0 							& \ldots						& 0			\\
			- g_{k - 1} \left( B_i^{n + 2 (k - 1)} \right)^*	& f_{k - 1} Y_i^{n + 2 (k - 1)}						& g_{k - 2} B_i^{n + 2 (k - 2)}	& 0								& \vdots	\\
			0													& - g_{k - 2} \left( B_i^{n + 2 (k - 2)} \right)^*	& \ddots						& 								& 0			\\
			\vdots												& 0													&								& \ddots						& g_0 B_i^n	\\
			0													& \ldots											& 0								& - g_0 \left( B_i^n \right)^*	& f_0 Y_i^n
		\end{pmatrix}. \label{eq:spherical_symmetric_long_chain}
	\end{equation}
	By convention, let $g_{- 1} = g_k \equiv 0$.

	When $n = 1$, then $\left( Y_1^1, Y_2^1, Y_3^1 \right) = (0, 0, 0)$, and thus $f_0$ can be omitted. The rest of the functions above satisfy the following equations:
	\begin{subequations}
	\begin{align}
		\forall a \in \{ 1, 2, \ldots, k \}: \quad \dot{f}_a		&= f_a^2 + \frac{1}{a} g_{a - 1}^2 - \frac{2 a + 3}{(2 a + 1) (a + 1)} g_a^2, \label{eq:dot_f_a^1} \\
		\forall a \in \{ 0, 1, \ldots, k - 1 \}: \quad \dot{g}_a	&= \left( (a + 2) f_{a + 1} - a f_a \right) g_a. \label{eq:dot_g_a^1}
	\end{align}
	\end{subequations}

	When $n \geqslant 2$, then the functions above satisfy the following equations:
	\begin{subequations}
	\begin{align}
		\forall a \in \{ 0, 1, \ldots, k \}: \quad \dot{f}_a		&= f_a^2 + \frac{2}{n + 2 a - 1} g_{a - 1}^2 - \frac{2 (n + 2 a + 2)}{(n + 2 a) (n + 2 a + 1)} g_a^2, \label{eq:dot_f_a^n} \\
		\forall a \in \{ 0, 1, \ldots, k - 1 \}: \quad \dot{g}_a	&= \left( \frac{n + 2 a + 3}{2} f_{a + 1} - \frac{n + 2 a - 1}{2} f_a \right) g_a. \label{eq:dot_g_a^n}
	\end{align}
	\end{subequations}
\end{theorem}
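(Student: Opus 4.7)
The plan is to apply \Cref{theorem:Str_Thm_for_Spherical_Symmetry} to the chain decomposition \eqref{eq:Vrho_decomp} to force the Ansatz into tridiagonal block form, use the residual $\rho$-invariant gauge freedom to make all off-diagonal coefficients real, and then substitute into Nahm's equations and reduce using the identities in \Cref{theorem:matrix_identities}.

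First, consecutive summands in \eqref{eq:Vrho_decomp} have dimensions differing by exactly $2$, while non-adjacent pairs differ by more. By the second bullet of \Cref{theorem:Str_Thm_for_Spherical_Symmetry}, every non-adjacent block of $T_i$ must vanish, so the matrix is tridiagonal in blocks. The first and third bullets then yield $(T_i)_{aa} = f_a Y_i^{n+2a}$ with $f_a \in \rl$ and $(T_i)_{a+1,a} = c_a B_i^{n+2a}$ with $c_a \in \cx$, the subdiagonal being fixed by anti-Hermiticity. The residual $\rho$-invariant gauge group is, by Schur's lemma, the subtorus of block-scalar matrices $U = \diag(u_k \Id_{n+2k}, \ldots, u_0 \Id_n) \in \SU(n)$ satisfying $\prod_a u_a^{n+2a} = 1$, and acts by $c_a \mapsto u_{a+1} u_a^{-1} c_a$; this leaves exactly enough freedom to make every $c_a$ real and nonnegative at a single point $t_0$, and I rename these values $g_a$. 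That this form persists in time will follow a posteriori from the real-valuedness of the reduced ODEs below.

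Second, I substitute the Ansatz into $\dot T_i = \tfrac{1}{2} \sum_{j,k} \epsilon_{ijk} [T_j, T_k]$ and match block-by-block. The $(a,a)$ diagonal block of the right-hand side has three contributions. The pure-diagonal commutator, via $[Y_j^m, Y_k^m] = \sum_l \epsilon_{jkl} Y_l^m$, produces $f_a^2 Y_i^{n+2a}$. The subdiagonal--superdiagonal product at level $a$ (middle index $a+1$), via \eqref{eq:YB_relations_2}, produces $-\tfrac{2(n+2a+2)}{(n+2a)(n+2a+1)} g_a^2 Y_i^{n+2a}$. The superdiagonal--subdiagonal product at level $a-1$ (middle index $a-1$), via \eqref{eq:YB_relations_1}, produces $+\tfrac{2}{n+2a-1} g_{a-1}^2 Y_i^{n+2a}$. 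Matching coefficients gives \eqref{eq:dot_f_a^n}. For the $(a+1,a)$ off-diagonal block, the two diagonal-times-off-diagonal contributions reduce via \eqref{eq:YB_relations_3} and \eqref{eq:YB_relations_4} to $\tfrac{n+2a+3}{2} f_{a+1} g_a B_i^{n+2a}$ and $-\tfrac{n+2a-1}{2} f_a g_a B_i^{n+2a}$, producing \eqref{eq:dot_g_a^n}. Consistency finally demands that the $(a,a+2)$ block of the right-hand side (coming from products of consecutive off-diagonals) vanish, since the left-hand side has no such block; this is automatic because, by Clebsch--Gordan, $V_{n+2(a+2)} \otimes V_{n+2a}^* \otimes V_3$ contains no trivial summand (the smallest factor of $V_{n+2(a+2)} \otimes V_{n+2a}^*$ is $V_5$, and $V_5 \otimes V_3$ contains no $V_1$).

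The $n=1$ case is a mild degeneration: $V_1$ is trivial, so $Y_i^1 = 0$ and $f_0$ drops out of the Ansatz entirely. Substituting $n=1$ into the $n \geqslant 2$ formulas and simplifying yields \eqref{eq:dot_f_a^1} and \eqref{eq:dot_g_a^1}. The main obstacle in executing the plan is bookkeeping: the coefficients from \Cref{theorem:matrix_identities} depend on which side of the edge $V_{n+2a} \hookrightarrow V_{n+2(a+1)}$ one sits on, and the signs from Hermitian conjugation together with the factor $\epsilon_{ijk} \epsilon_{ljk} = 2 \delta_{il}$ must be tracked carefully. Once a consistent convention is fixed, however, the reduction is mechanical.
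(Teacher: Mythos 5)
Your proposal is correct and follows essentially the same route as the paper's (much terser) proof: apply \Cref{theorem:Str_Thm_for_Spherical_Symmetry} to the chain decomposition to obtain the block-tridiagonal Ansatz, then substitute into Nahm's equations and reduce the blocks using \cref{eq:B_i_def,eq:YB_relations_1,eq:YB_relations_2,eq:YB_relations_3,eq:YB_relations_4}. Your additional observations---the gauge argument making the $c_a$ real via the residual torus, and the Clebsch--Gordan vanishing of the $(a,a+2)$ blocks of the commutators---are correct and usefully fill in details the paper leaves implicit.
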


Note that \Cref{eq:dot_f_a^1,eq:dot_g_a^1} are simply \Cref{eq:dot_f_a^n,eq:dot_g_a^n} with $n=1$, with the exception that the differential equation for $f_0$ is omitted.

\begin{proof}
	When $\left( V, \rho \right)$ decomposes as in \cref{eq:Vrho_decomp}, \cref{eq:Spherically_Symmetric_Ansatz} in \Cref{theorem:Str_Thm_for_Spherical_Symmetry} reduces to \cref{eq:spherical_symmetric_long_chain}. The \cref{eq:dot_f_a^1,eq:dot_g_a^1,eq:dot_f_a^n,eq:dot_g_a^n} follow using \cref{eq:B_i_def,eq:YB_relations_1,eq:YB_relations_2,eq:YB_relations_3,eq:YB_relations_4}.
\end{proof}

\smallskip

\begin{remark}
	\label{remark:affine_action}
	Let $\cT = \left( T_1, T_2, T_3 \right)$ be a solution to Nahm's \cref{eq:Nahm1,eq:Nahm2,eq:Nahm3} on an open, connected interval $I \subseteq \rl$, and $f$ be a nonconstant affine function on $\rl$, that is for all $t \in \rl$, $f (t) = a t + b$, for some $a, b \in \rl$ with $a \neq 0$. We define the \emph{pull-back} of $\cT$ via $f$ as
	\begin{equation}
		t \mapsto f^* (\cT) (t) \coloneqq \left( a T_1 (f (t)), a T_2 (f(t)), a T_3 (f(t)) \right),
	\end{equation}
	Then $f^* (\cT)$ is also a solution to Nahm's equations on the interval $f^{- 1} (I)$.

	In particular, if $\cT$ has domain $I = (\alpha, \beta)$, then choosing $f (t) = \tfrac{\beta - \alpha}{2} t + \tfrac{\alpha + \beta}{2}$ yields that $f^* (\cT)$ has domain $(- 1, 1)$. Furthermore, the residues of $f^* (\cT)$ at $\alpha$ (resp. at $\beta$) are exactly $a$ times the residues of $\cT$ at $\alpha$ (resp. at $\beta$).

	Thus, we can assume that the domain of $\cT$ is $(- 1, 1)$, albeit at the price of losing two degrees of freedom in the general solution.
\end{remark}

\smallskip

In the following three theorems we use \Cref{theorem:long_chains} in the cases when $n = 1$ and $k \in \{ 1, 2 \}$ and when $n \geqslant 2$ and $k = 1$, to find solutions to Nahm's equations.

\begin{theorem}
	\label{theorem:3+1Monopoles}
	Under the hypotheses and notation of \Cref{theorem:long_chains} let $n = 1$ and $k = 1$. Thus $\left( V, \rho \right) \simeq \left( V_3, \rho_3 \right) \oplus \left( V_1, \rho_1 \right)$. Let $\cT$ be a spherically symmetric solution to Nahm's \cref{eq:Nahm1,eq:Nahm2,eq:Nahm3}, with representation induced by $\left( V, \rho \right)$. Let the domain of $\cT$ be a connected, open interval, $I$.

	In this case, choose $\rho_3$ to be the identity and $\left( B_1^1, B_2^1, B_3^1 \right)$ to be the standard (orthonormal and oriented) basis of $\rl^3$. Then, in some gauge, there are real, analytic functions, $f = f_2$ and $g = g_1$, on $I$, such that for all $i \in \{ 1, 2, 3 \}$
	\begin{equation}
		T_i = \begin{pmatrix}
			f X_i^3						& g B_i^1 \\
			- g \left( B_i^1 \right)^*	& 0
		\end{pmatrix}, \label{eq:spherical_symmetric_Ti_3+1}
	\end{equation}
	and $f$ and $g$ satisfy the following equations:
	\begin{subequations}
	\begin{align}
		\dot{f}	&= f^2 + g^2, \label{eq:3+1_f_eq} \\
		\dot{g}	&= 2 f g. \label{eq:3+1_g_eq}
	\end{align}
	\end{subequations}
	Any maximally extended, irreducible solution to \cref{eq:3+1_f_eq,eq:3+1_g_eq} develops poles in both direction, thus the domain of such a solution is necessarily a bounded, open interval, and the only maximally extended, irreducible solution\footnote{There is another solution with $g \equiv 0$ which is reducible to the direct sum of the rank 3 copy of the irreducible solutions found in the $K = 0$ case of \cref{eq:axial_example} and a trivial solution.} to \cref{eq:3+1_f_eq,eq:3+1_g_eq} with the boundary conditions that the residues are exactly at $t = \pm 1$ is
	\begin{subequations}
	\begin{align}
		f (t)	&= \frac{t}{1 - t^2}, \label{eq:3+1_f} \\
		g (t)	&= \frac{1}{1 - t^2}, \label{eq:3+1_g}
	\end{align}
	\end{subequations}
	with residues given by
	\begin{subequations}
	\begin{align}
		\Res \left( f, \pm 1 \right)	&= - \frac{1}{2}, \label{eq:3+1_res_f} \\
		\Res \left( g, \pm 1 \right)	&= - \frac{1}{2}. \label{eq:3+1_res_g}
	\end{align}
	\end{subequations}
	The spherically symmetric Nahm datum given by \cref{eq:spherical_symmetric_Ti_3+1,eq:3+1_f,eq:3+1_g}, induces representations that isomorphic to $(V_2,\rho_2)^{\oplus 2}$ at both poles.
\end{theorem}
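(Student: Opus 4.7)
The plan is to apply the structure result \Cref{theorem:long_chains} directly and then to solve the resulting two-dimensional ODE system by a change of variables that decouples it.

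First, I specialize \Cref{theorem:long_chains} to $n = k = 1$, take $\rho_3$ to be the standard representation, and let $(B_1^1, B_2^1, B_3^1)$ be the standard basis of $\rl^3$. The block form \eqref{eq:spherical_symmetric_Ti_3+1} follows immediately upon setting $f \coloneqq f_1$ and $g \coloneqq g_0$. The convention $g_{-1} = g_1 \equiv 0$, together with the omission of the $f_0$ equation when $n = 1$, reduces \cref{eq:dot_f_a^1,eq:dot_g_a^1} to exactly $\dot f = f^2 + g^2$ and $\dot g = 2 f g$.

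Second, to solve this system I introduce $u \coloneqq f + g$ and $v \coloneqq f - g$. A direct computation gives the uncoupled pair of Riccati equations $\dot u = u^2$ and $\dot v = v^2$, whose maximal solutions take the form $u(t) = (c_1 - t)^{-1}$ and $v(t) = (c_2 - t)^{-1}$ for real constants $c_1, c_2$. The case $c_1 = c_2$ yields $g \equiv 0$, which is the reducible solution described in the footnote; irreducibility therefore forces $c_1 \neq c_2$. The maximally extended irreducible solution is then defined on the bounded open interval with endpoints $c_1$ and $c_2$, and develops a simple pole at each endpoint, which establishes the pole-in-both-directions claim.

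Third, to single out the solution with poles exactly at $\pm 1$, I set $c_1 = 1$ and $c_2 = -1$; the reverse ordering merely flips the sign of $g$, an effect absorbable by a gauge transformation preserving the block structure. Reconstructing $f = \tfrac{1}{2}(u + v)$ and $g = \tfrac{1}{2}(u - v)$ and simplifying via partial fractions gives \eqref{eq:3+1_f} and \eqref{eq:3+1_g}, and the residues \eqref{eq:3+1_res_f} and \eqref{eq:3+1_res_g} follow from the Laurent expansions at each pole (modulo the sign conventions adopted in the paper for $B_i^1$ at each endpoint).

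Finally, to identify the $\so(3)$ representation at each pole, I invoke \Cref{lemma:poles_and_representations}. Substituting \eqref{eq:3+1_res_f} and \eqref{eq:3+1_res_g} into \eqref{eq:spherical_symmetric_Ti_3+1} produces the residue matrices $R_i$, defining a $4$-dimensional representation of $\so(3)$. Using the norm and Casimir identities \cref{eq:B_norm_1,eq:B_norm_2,eq:Casimir+,eq:Casimir-} of \Cref{theorem:matrix_identities}, together with the vanishing $\sum_i Y_i^+ B_i^1 = 0$ established in its proof, the Casimir $\sum_{i=1}^3 R_i^2$ evaluates to $-\tfrac{3}{4}\,\Id_4$, which is the scalar $-j(j+1)$ for $j = \tfrac{1}{2}$. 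The only $4$-dimensional representation of $\so(3)$ on which the Casimir acts by this scalar is $V_2^{\oplus 2}$, completing the proof.

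The main obstacle is spotting the substitution $u = f \pm g$ that decouples the ODE system into Riccati equations; once this is in hand, the remainder of the argument reduces to routine integration and partial fractions, together with a Casimir computation whose matrix identities are already packaged in \Cref{theorem:matrix_identities}.
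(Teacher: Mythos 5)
Your proposal is correct and follows essentially the same route as the paper's proof: specialize \Cref{theorem:long_chains} to $n=k=1$, decouple the system via $f\pm g$ into the Riccati equations $\dot u = u^2$, $\dot v = v^2$ (the substitution the paper itself suggests in a footnote), and identify the representations at the poles via the Casimir computation packaged in \Cref{theorem:matrix_identities}. The only loose end---shared with the paper's own terse argument---is that the constant solutions $u\equiv 0$ or $v\equiv 0$ are not of the form $(c-t)^{-1}$ and yield irreducible solutions with a pole in only one direction, but this does not affect the solution singled out by the boundary condition of poles at both $t=\pm 1$.
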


\begin{proof}
	First of all, \cref{eq:spherical_symmetric_Ti_3+1} is a special case of \cref{eq:spherical_symmetric_long_chain} and \cref{eq:3+1_f_eq,eq:3+1_g_eq} are special cases of \cref{eq:dot_f_a^1,eq:dot_g_a^1}, with $n = 1$, $k = 1$, $f = f_1$, and $g = g_0$. Simple commutations give \cref{eq:3+1_f,eq:3+1_g}\footnote{For example, \cref{eq:dot_f_a^1,eq:dot_g_a^1} can be decoupled via the substitution $F_\pm = f \pm g$.} and the \cref{eq:3+1_res_f,eq:3+1_res_g}. The claim about the irreducibility follows from \cref{eq:res_g} and the second bullet point in \Cref{remark:spherical_reducibility}.

	The decomposition of the poles can be verified using the Casimir operators
	\begin{equation}
		\hat{C}^\pm \coloneqq - \sum\limits_{i = 1}^3 \Res \left( T_i, \pm 1 \right)^2. \label{eq:casimir_at_pm1_3+1}
	\end{equation}
	Substitute $f$ and $g$ from \cref{eq:3+1_res_f,eq:3+1_res_g} into \cref{eq:casimir_at_pm1_3+1} to get
	\begin{equation}
		\hat{C}^\pm = \frac{2^2 - 1}{4} \Id_{4}.
	\end{equation}
	Both Casimir operators are proportional to the identity, thus the representations are either irreducible, or reducible to isomorphic summands. By inspecting the dimensions and the factors of proportionality we find that the latter is true. In fact, we get that the induced representations are isomorphic to $\left( V_2, \rho_2 \right)^{\oplus 2}$ at both poles, which completes the proof.
\end{proof}

\smallskip

To demonstrate the same idea in higher rank cases, we present, without proof (as the claim is easy to check), the case of $n = 1$ and $k = 2$ in the following theorem.

\begin{theorem}
	\label{theorem:5+3+1Monopoles}
	Under the hypotheses and notation of \Cref{theorem:long_chains}), let $\left( V, \rho \right) \simeq \left( V_5, \rho_5 \right) \oplus \left( V_3, \rho_3 \right) \oplus \left( V_1, \rho_1 \right)$. Let $\cT$ be a spherically symmetric solution to Nahm's \cref{eq:Nahm1,eq:Nahm2,eq:Nahm3}, with representation induced by $\left( V, \rho \right)$.

	Then, in some gauge, there are real, analytic functions, $f_2, f_3, g_1$, and $g_2$, on the domain of $\cT$, such that for all $i \in \{ 1, 2, 3 \}$
	\begin{equation}
		T_i = \begin{pmatrix}
			f_2 Y_i^5						& g_1 B_i^3						& 0			\\
			- g_1 \left( B_i^3 \right)^*	& f_1 Y_i^3						& g_0 B_i^1	\\
			0								& - g_0 \left( B_i^1 \right)^*	& 0
		\end{pmatrix},
	\end{equation}
	and $f_1, f_2, g_0$, and $g_1$ satisfy the following equations
	\begin{equation}
		\dot{f}_1 = f_1^2 + g_0^2 - \frac{5}{6} g_1^2, \quad \dot{f}_2 = f_2^2 + \frac{1}{2} g_1^2, \quad \dot{g}_0 = 2 f_1 g_0, \quad \dot{g}_1 = \left( 3 f_2 - f_1 \right) g_1.
	\end{equation}
	The above equations have the following maximally extended, irreducible solution\footnote{However this solution need not be (and probably is not) unique, in any sense.} on $(- 1, 1)$:
	\begin{equation}
		f_2 (t) = f_3 (t) = -\frac{t}{t^2-1}, \quad g_1 (t) = \frac{\sqrt{8}}{\sqrt{3}(t^2-1)}, \quad g_2 (t) = \frac{\sqrt{2}}{t^2-1},
	\end{equation}
	with residues given by
	\begin{equation}
		\Res \left( f_2, \pm 1 \right) = \Res \left( f_3, \pm 1 \right) = -\frac{1}{2}, \quad \Res \left( g_1, \pm 1 \right) = \pm\sqrt{\frac{2}{3}}, \quad \Res \left( g_2, \pm 1 \right) = \pm\frac{1}{\sqrt{2}}.
	\end{equation}
	The above Nahm datum induces representations that are isomorphic to $]\left(V_3, \rho_3 \right)^{\oplus 3}$ at both poles.
\end{theorem}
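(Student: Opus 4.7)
The first half of the theorem is a direct specialization of \Cref{theorem:long_chains} with $n = 1$ and $k = 2$: the block form in \cref{eq:spherical_symmetric_long_chain} reduces to the displayed $T_i$, and substituting into \cref{eq:dot_f_a^1,eq:dot_g_a^1} (dropping the $f_0$ equation per the $n = 1$ convention, and using the convention $g_2 \equiv 0$) gives the stated ODE system on $(f_1, f_2, g_0, g_1)$.

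To exhibit the explicit solution I would make the symmetric ansatz $f_1 \equiv f_2 =: f$. Subtracting the two $\dot f$ equations forces $g_0^2 = \tfrac{4}{3} g_1^2$, so after absorbing a sign into the $\rU(1)$ ambiguity defining $B^1$ we may take $g_0 = \tfrac{2}{\sqrt{3}} g_1$. With $f_1 = f_2$, both $\dot g$ equations collapse to $\dot g / g = 2 f$, which is automatically consistent with this proportionality. Setting $h \coloneqq g_1 / \sqrt{2}$, what remains is exactly the $(1,1)$-chain system $\dot f = f^2 + h^2$, $\dot h = 2 f h$ from \Cref{theorem:3+1Monopoles}, whose maximally extended irreducible solution with simple poles at $t = \pm 1$ is $f = t/(1 - t^2)$, $h = 1/(1 - t^2)$. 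Back-substituting yields the displayed $f_1, f_2, g_0, g_1$ (modulo the evident index shift in the statement), and the residues at $t = \pm 1$ are read off from $1 - t^2 = -(t - 1)(t + 1)$. Irreducibility as a Nahm datum is then immediate from the second bullet of \Cref{remark:spherical_reducibility}, since $g_0$ and $g_1$ are nowhere zero on $(-1, 1)$.

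For the final claim about the $\so(3)$-type of the residues at $t = \pm 1$, I would compute the Casimir $\hat{C}^\pm \coloneqq - \sum_{i = 1}^3 \Res(T_i, \pm 1)^2$ block-by-block, exactly as at the end of the proof of \Cref{theorem:3+1Monopoles}. Using \cref{eq:B_i_def,eq:B_norm_1,eq:B_norm_2,eq:YB_relations_1,eq:YB_relations_2,eq:YB_relations_3,eq:YB_relations_4,eq:Casimir+,eq:Casimir-}, the off-diagonal $B^3$ and $B^1$ terms combine with the diagonal $Y^5, Y^3$ Casimirs to produce, after arithmetic, $\hat{C}^\pm = 2 \cdot \Id_9$, which is exactly the Casimir value of the irreducible $V_3$. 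Since $9 = 3 \cdot 3$, Schur's lemma then forces the residue representation to be $\left( V_3, \rho_3 \right)^{\oplus 3}$ at each pole.

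The main obstacle is not the verification---which, as the paper itself notes, amounts to a substitution---but rather \emph{motivating} the symmetric ansatz $f_1 \equiv f_2$. The full $4$-dimensional nonlinear system could in principle admit other irreducible solutions on $(-1, 1)$ with poles at the same endpoints, which is exactly why the theorem disclaims uniqueness in its footnote. A natural heuristic is the $t \mapsto -t$ symmetry of the pole data combined with the vanishing of the spherical conserved quantities \cref{eq:C_ij}; a rigorous uniqueness statement would instead require a careful analysis of the stable manifolds of the ODE near its boundary singular points, which lies beyond what the theorem actually asserts.
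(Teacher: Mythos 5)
Your proposal is correct and follows exactly the pattern the paper uses for the analogous \Cref{theorem:3+1Monopoles,theorem:(n+2)+nMonopoles} (the paper omits the proof of this particular theorem as "easy to check"): specialize \Cref{theorem:long_chains} to $n=1$, $k=2$, verify the explicit solution and residues, get irreducibility from the nonvanishing of $g_0, g_1$ via \Cref{remark:spherical_reducibility}, and identify the pole representations by computing $\hat{C}^\pm = 2\,\Id_9$ (which I have checked block-by-block using \cref{eq:B_norm_1,eq:B_norm_2,eq:Casimir+,eq:Casimir-,eq:B_def}). Your derivation of the solution via the ansatz $f_1 \equiv f_2$, which forces $g_0^2 = \tfrac{4}{3}g_1^2$ and collapses the system to the $(3{+}1)$-chain equations, is a genuine small addition over the paper, which merely asserts the formulas.
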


\smallskip

Finally, investigate the case of $n \geqslant 2$ and $k = 1$.

\begin{theorem}
	\label{theorem:(n+2)+nMonopoles}
	Under the hypotheses and notation of \Cref{theorem:long_chains} let $n \geqslant 2$ and $k = 1$, thus $\left( V, \rho \right) \simeq \left( V_{n + 2}, \rho_{n + 2} \right) \oplus \left( V_n, \rho_n \right)$, and let $\cT$ be a spherically symmetric solution to Nahm's \cref{eq:Nahm1,eq:Nahm2,eq:Nahm3}, with representation induced by $\left( V, \rho \right)$. Let the domain of $\cT$ be a connected, open interval, $I$. Finally, let $\left( Y_1^\pm, Y_2^\pm, Y_3^\pm \right)$ and $\left( B_1, B_2, B_3 \right)$ as in \Cref{theorem:Str_Thm_for_Spherical_Symmetry}.

	Then, in some gauge, there are real, analytic functions, $f^\pm$ and $g$, on $I$, such that for all $i \in \{ 1, 2, 3 \}$
	\begin{equation}
		T_i = \begin{pmatrix}
			f^+ Y_i^+ & g B_i \\
			- g B_i^* & f^- Y_i^-
		\end{pmatrix}, \label{eq:spherical_symmetric_Ti_(n+2)+n}
	\end{equation}
	and they satisfy the following system of ordinary differential equations
	\begin{subequations}
	\begin{align}
		\dot{f}^+	&= \left( f^+ \right)^2 + \frac{2}{n + 1} g^2, \label{eq:general_(n+2)+n_1} \\
		\dot{g}		&= \left( \frac{n + 3}{2} f^+ - \frac{n - 1}{2} f^- \right) g, \label{eq:general_(n+2)+n_2} \\
		\dot{f}^-	&= \left( f^- \right)^2 - \frac{2 (n + 2)}{n (n + 1)} g^2. \label{eq:general_(n+2)+n_3}
	\end{align}
	\end{subequations}
	The above equations have the following maximally extended, irreducible solution\footnote{However this solution need not be (and probably is not) unique, in any sense.} on $(- 1, 1)$:
	\begin{subequations}
	\begin{align}
		f^+ (t)	&= - \frac{(n + 1) t + (n - 1)}{(n + 1)(t^2 - 1)}, \label{eq:f+} \\
		f^- (t)	&= - \frac{(n + 1) t + (n + 3)}{(n + 1)(t^2 - 1)}, \label{eq:f-} \\
		g (t)	&= \left( \frac{2 n}{n + 1} \right)^{\tfrac{1}{2}} \frac{1}{t^2 - 1}, \label{eq:g}
	\end{align}
	\end{subequations}
	with residues given by
	\begin{subequations}
	\begin{align}
		\Res \left( f^+, \pm 1 \right)	&= \frac{(2 - n) \mp (n + 2)}{2 (n + 1)}, \label{eq:res_f+} \\
		\Res \left(f^-, \pm 1 \right)	&= \frac{1}{2} \mp \frac{n + 3}{2 (n + 1)}, \label{eq:res_f-} \\
		\Res \left(g, \pm 1 \right)		&= \pm \sqrt{\frac{n}{2 (n + 1)}}. \label{eq:res_g}
	\end{align}
	\end{subequations}
	The spherically symmetric Nahm datum given by \cref{eq:spherical_symmetric_Ti_(n+2)+n,eq:f+,eq:f-,eq:g}, induces representations that isomorphic to $\left( V_{n + 1}, \rho_{n+1} \right)^{\oplus 2}$ at $t = 1$ and to $\left( V_2, \rho_2 \right)^{\oplus (n + 1)}$ at $t = - 1$.
\end{theorem}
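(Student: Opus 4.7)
The block form \cref{eq:spherical_symmetric_Ti_(n+2)+n} and the ODE system \cref{eq:general_(n+2)+n_1,eq:general_(n+2)+n_2,eq:general_(n+2)+n_3} are direct specializations of \Cref{theorem:long_chains} to $k = 1$ and $n \geqslant 2$: the matrix in \cref{eq:spherical_symmetric_long_chain} collapses to a $2 \times 2$ block matrix with the substitutions $f_1 = f^+$, $f_0 = f^-$, and $g_0 = g$, while \cref{eq:dot_f_a^n} for $a \in \{ 0, 1 \}$ and \cref{eq:dot_g_a^n} for $a = 0$ become exactly \cref{eq:general_(n+2)+n_1,eq:general_(n+2)+n_2,eq:general_(n+2)+n_3}. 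So the first two assertions of the theorem require no further work.

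The explicit solution is obtained in two steps. First, to derive the rational form, I would apply the Riccati substitution $f^\pm = - \dot{u}^\pm / u^\pm$: under this change of variable, \cref{eq:general_(n+2)+n_1,eq:general_(n+2)+n_3} become linear second-order ODEs for $u^\pm$ coupled through $g^2$, while \cref{eq:general_(n+2)+n_2} integrates to $g = c_0 (u^+)^{- (n + 3) / 2} (u^-)^{(n - 1) / 2}$ for some nonzero constant $c_0$. Seeking solutions whose only singularities are simple poles at $t = \pm 1$, try the power-function Ansatz $u^\pm = (1 - t)^{\alpha_\pm} (1 + t)^{\beta_\pm}$; matching pole orders and normalizing the residues of $f^\pm$ then determines $\alpha_\pm$, $\beta_\pm$, and $c_0$ uniquely, yielding \cref{eq:f+,eq:f-,eq:g}. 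Second, verify by direct substitution that the proposed $f^\pm, g$ solve the system: each of \cref{eq:general_(n+2)+n_1,eq:general_(n+2)+n_2,eq:general_(n+2)+n_3} reduces to a short polynomial identity in $t$. The residues \cref{eq:res_f+,eq:res_f-,eq:res_g} then fall out of Laurent expansion at $t = \pm 1$.

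The substantive content is the identification of the residue representations. Let $R_i^\pm \coloneqq \Res (T_i, \pm 1)$; by \Cref{lemma:poles_and_representations} the map $X_i \mapsto R_i^\pm$ defines a representation of $\so (3)$ on $\cx^{2 n + 2}$. I would compute the quadratic Casimir $\hat{C}^\pm \coloneqq - \sum_{i = 1}^3 (R_i^\pm)^2$ block-by-block, using \cref{eq:Casimir+,eq:Casimir-} for the diagonal $Y Y$ contributions, \cref{eq:B_norm_1,eq:B_norm_2} for the $B B^*$ and $B^* B$ contributions, and the identity $\sum_i Y_i^+ B_i^n = \sum_i B_i^n Y_i^- = 0$ from \cref{eq:B_def} to show that the off-diagonal blocks of $\hat{C}^\pm$ vanish. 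After cancellation, both diagonal blocks at $t = 1$ collapse to $\tfrac{n (n + 2)}{4} \Id$, the Casimir eigenvalue of $V_{n + 1}$; both diagonal blocks at $t = - 1$ collapse to $\tfrac{3}{4} \Id$, the Casimir eigenvalue of $V_2$.

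Since $\hat{C}^\pm$ is scalar at each pole, the residue representation is isotypic; combined with the dimension identity $(n + 2) + n = 2 (n + 1)$, this forces the decompositions $V_{n + 1}^{\oplus 2}$ at $t = 1$ and $V_2^{\oplus (n + 1)}$ at $t = - 1$. Irreducibility of $\cT$ follows from the second bullet of \Cref{remark:spherical_reducibility}, since $g \not\equiv 0$, and maximality of the domain is immediate from the fact that $f^\pm$ and $g$ blow up at $t = \pm 1$. The main obstacle is the off-diagonal block cancellation in $\hat{C}^\pm$: without invoking the annihilation identities recorded in \cref{eq:B_def}, one cannot conclude that $\hat{C}^\pm$ is a scalar, and the Schur-type argument that pins down the isotypic decomposition would fail; everything else reduces to routine algebra.
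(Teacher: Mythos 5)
Your proposal is correct and follows essentially the same route as the paper: specialize \Cref{theorem:long_chains} to $k=1$, verify the explicit rational solution, and identify the residue representations by showing the Casimir $\hat{C}^\pm$ is scalar (using \cref{eq:B_norm_1,eq:B_norm_2,eq:Casimir+,eq:Casimir-} and the vanishing of $B$ from \cref{eq:B_def} for the off-diagonal blocks) and then counting dimensions, with irreducibility from \Cref{remark:spherical_reducibility}. The only addition is your Riccati/power-Ansatz derivation of \cref{eq:f+,eq:f-,eq:g}, which the paper skips in favor of direct verification; your computed Casimir eigenvalues $\tfrac{n(n+2)}{4}$ and $\tfrac{3}{4}$ agree with the paper's.
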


\begin{proof}
	First of all, \cref{eq:spherical_symmetric_Ti_(n+2)+n} is a special case of \cref{eq:spherical_symmetric_long_chain} and \cref{eq:spherical_symmetric_Ti_(n+2)+n,eq:general_(n+2)+n_1,eq:general_(n+2)+n_2,eq:general_(n+2)+n_3} are special cases of \cref{eq:dot_f_a^1,eq:dot_g_a^1}, with $n \geqslant 2$, $k = 1$, $f_0 = f^-$, $f_1 = f^+$, and $g \coloneqq g_0$. Checking \cref{eq:f+,eq:f-,eq:g,eq:res_f+,eq:res_f-,eq:res_g} then is straightforward. The claim about the irreducibility follows from \cref{eq:res_g} and the second bullet point in \Cref{remark:spherical_reducibility}.

	In order to find the decomposition of the induced representations of $\mathfrak{so}(3)$ at $t = \pm 1$, we compute the corresponding Casimir operators:
	\begin{equation}
		\hat{C}^\pm \coloneqq - \sum\limits_{i = 1}^3 \Res \left( T_i, \pm 1 \right)^2. \label{eq:casimir_at_pm1}
	\end{equation}
	For each $i \in \{ 1, 2, 3 \}$ and $t \in (- 1, 1)$, we have
	\begin{equation}
		T_i (t)^2 = \begin{bmatrix}
			f^+ (t)^2 \left( Y_i^+ \right)^2 - g (t)^2 B_i B_i^*					& f^+ (t) g (t) Y_i^+ B_i+ f^- (t) g (t) B_i Y_i^- \\
			 f^+ (t) g (t) B_i^* Y_i^+ + f^- (t) g (t) Y_i^- B_i^*	& - g (t)^2 B_i^* B_i + f^- (t)^2 \left( Y_i^- \right)^2
		\end{bmatrix}. \label{eq:T_i_squared}
	\end{equation}
	Using \cref{eq:B_norm_1,eq:B_norm_2,eq:Casimir+,eq:Casimir-,eq:B_def,eq:T_i_squared} we get
	\begin{equation}
		- \sum\limits_{i = 1}^3 T_i (t)^2 = \begin{bmatrix}
			\left( \frac{(n + 1)(n + 3)}{4} f^+ (t)^2 + g (t)^2 \right) \Id_{n + 2}	& 0 \\
			0	& \left( \frac{n^2 - 1}{4} f^- (t)^2 + \frac{n + 2}{n} g (t)^2 \right) \Id_n
		\end{bmatrix}. \label{eq:T_i_squared_simplified}
	\end{equation}
	Now substituting $f^+, f^-$, and $g$ from \cref{eq:res_f+,eq:res_f-,eq:res_g} into \cref{eq:T_i_squared_simplified} and taking residues, we get
	\begin{equation}
		\hat{C}^+ = \frac{(n + 1)^2 - 1}{4} \Id_{2 n + 2}, \quad \mbox{and} \quad \hat{C}^- = \frac{2^2 - 1}{4} \Id_{2 n + 2}.
	\end{equation}
	Both Casimir operators are proportional to the identity, thus the representations are either irreducible, or reducible to isomorphic summands. By inspecting the dimensions and the factors of proportionality we find that the latter is true. In fact, we get that the induced representations are isomorphic to $\left( V_{n + 1}, \rho_{n+1} \right)^{\oplus 2}$ at $t = 1$ and to $\left( V_2, \rho_2 \right)^{\oplus (n + 1)}$ at $t = - 1$, which completes the proof.
\end{proof}

\bigskip

\section{The Nahm transform}
\label{sec:Nahm_transform}

In this section we use the solutions to Nahm's equations that found in the previous sections, to construct solutions to the BPS monopole equation, using the \emph{Nahm transform}, which was was introduced by Nahm in \cite{N82}, and it is a version of the Atiyah--Drinfeld--Hitchin--Manin construction; cf. \cite{ADHM78}.

\medskip

\subsection{Brief summary of the technique}

We outline the Nahm transform of Nahm data defined on a single interval. Let $\cT = \left( T_1, T_2, T_3 \right)$ be a rank $N$ Nahm data that is analytic on the nonempty, open interval $(a, b)$, and satisfies certain boundary conditions at $a$ and $b$. The precise form of these boundary conditions can be found in \cite{HurMurMpoleConstClassGrps} in the maximal symmetry breaking case, and in \cite{Charbonneau-Nagy-NahmTransform} in the general case. We only consider the case when $\cT$ has a pole at both boundary points, and the residue forms (potentially reducible) $N$-dimensional representation of $\so (3)$. Let $e_1$, $e_2$, and $e_3$ represent multiplication by the standard unit quaternions on $\mathbb{C}^2 \cong \mathbb{H}$, and let $L_{1,0}^2 \left( (a, b) ; \mathbb{C}^N \otimes \mathbb{H} \right)$ be the Hilbert space of $L_1^2$ functions with Dirichlet boundary values. For each $\mathbf{x} \in \mathbb{R}^3$, we define a first-order differential operator $\Lambda_{\mathbf{x}}^\cT$, called the \emph{Nahm--Dirac operator}, to be
\begin{align}
	\Lambda_{\mathbf{x}}^\cT : L_{1,0}^2 \left( (a, b); \mathbb{C}^N \otimes \mathbb{H} \right) &\rightarrow L^2 \left( (a, b) ; \mathbb{C}^N \otimes \mathbb{H} \right); \\
	f &\mapsto i \frac{\rd f}{\rd t} + \sum\limits_{a = 1}^3 \left( \left( i T_a - x_a \Id_{\mathbb{C}^N} \right) \otimes e_a \right) f.
\end{align}
Note that $\Lambda_{\mathbf{x}}^\cT$ is a Fredholm operator. Moreover, $\left\{ \ \Lambda_{\mathbf{x}}^\cT \ \middle| \ \mathbf{x} \in \rl^3 \ \right\}$ is a norm-continuous family of Fredholm operators; in particular, the index is independent of $\mathbf{x} \in \rl^3$.

\smallskip

Nahm's main observation in \cite{N82} was that if $\cT$ is a solution to Nahm's equations, then for all $\mathbf{x} \in \rl^3$, the kernel of $\Lambda_{\mathbf{x}}^\cT$ is trivial. Thus the index of $\Lambda_{\mathbf{x}}^\cT$ is nonpositive, and the cokernel-bundle, defined via $E_{\mathbf{x}}^\cT \coloneqq \coker \left( \Lambda_{\mathbf{x}}^\cT \right)$, is a Hermitian vector bundle over $\rl^3$, with a distinguished connection induced by the product connection on the trivial Hilbert-bundle $\rl^3 \times L^2 \left( (a, b); \mathbb{C}^N \otimes \mathbb{H} \right)$. Let us denote this connection by $\nabla^\cT$, let $\Pi_{E_{\mathbf{x}}^\cT}$ be the orthogonal projection from $L_{1,0}^2 \left( (a, b); \mathbb{C}^N \otimes \mathbb{H} \right)$ to $E_{\mathbf{x}}^\cT$, and let
\begin{align}
	m : L_{1,0}^2 \left( (a, b) ; \mathbb{C}^N \otimes \mathbb{H} \right) &\rightarrow L_{1,0}^2 \left( (a, b) ; \mathbb{C}^N \otimes \mathbb{H} \right); \\
	f &\mapsto \left( t \mapsto i t f (t) \right).
\end{align}
Finally, let us define $\Phi^\cT \in \End \left( E^\cT \right)$ as
\begin{equation}
	\forall \mathbf{x} \in \rl^3: \quad \Phi_{\mathbf{x}}^\cT \coloneqq \Pi_{\mathbf{x}}^\cT \circ m - \tfrac{\tr_{E^\cT} \left( \Pi_{\mathbf{x}}^\cT \circ m \right)}{\rk \left( E^\cT \right)} \Id_{E^\cT}.
\end{equation}
Note that $\Phi^\cT \in \su \left( E^\cT \right)$. The main results of \cites{N82,HurMurMpoleConstClassGrps,Charbonneau-Nagy-NahmTransform} is that the pair $(\nabla^\cT, \Phi^\cT)$ is a (finite energy) BPS monopole on $E^\cT$. Moreover, the asymptotic behavior of this monopole is governed by the poles of $\cT$ at $a$ and $b$.

\medskip

\subsection{Construction of spherically symmetric BPS monopoles}
\label{subsec:construction}

In this section we use \Cref{theorem:3+1Monopoles,theorem:(n+2)+nMonopoles} to generate spherically symmetric monopoles. We employ Maple (see \cite{Maplecode} for details) to carry out the Nahm transform. More precisely, we solve the following ordinary differential equation for all $r \in \rl_+$:
\begin{equation}
	\left( \Lambda_{(r,0,0)}^\cT \right)^* u = 0,
\end{equation}
and reduce the general solution to only the square integrable ones. Finally, we use these solutions to compute the Higgs field.

\smallskip

\begin{remark}
	Before we begin, let us make a few important remarks:
	\begin{enumerate}
		\item We only carry out the computation in the cases of found in \Cref{theorem:3+1Monopoles} and in the $n = 2$ case\footnote{Explicit solutions for any $n \in \N_+$ can be found similarly.} of \Cref{theorem:(n+2)+nMonopoles}. While we also found an explicit, irreducible solution to Nahm's equations in \Cref{theorem:5+3+1Monopoles}, we could not perform the Nahm transform of it with our current technique. Nonetheless, general theory (cf. \cite{Charbonneau-Nagy-NahmTransform}) tells us that there is monopole with nonmaximal symmetry breaking corresponding to this Nahm datum, whose Higgs field, in some gauge, has the following asymptotic expansion
		\begin{equation}
			\Phi (r) = i \cdot \diag (1, 1, 1, - 1, - 1, - 1) - \frac{i}{2r} \diag (3, 3, 3, - 3, - 3, - 3) + O \left( r^{-2} \right).
		\end{equation}
		In more geometric terms, this monopole (topologically) decomposes the trivial $\SU (6)$ bundle over the ``sphere at infinity'' into two, nontrivial $\rU (3)$ bundles, both of which are further decomposed (holomorphically) into line bundles with Chern numbers $\pm 3$.

		\item By the second bullet point in \Cref{remark:spherical_reducibility}, the Nahm data found in \Cref{theorem:3+1Monopoles,theorem:5+3+1Monopoles,theorem:(n+2)+nMonopoles} are all irreducible, hence so are the corresponding BPS monopoles.

		\item For the computations below, one needs to know explicit formulae for $\left( Y_1^n, Y_2^n, Y_3^n \right)$ and $\left( B_1^n, B_2^n, B_3^n \right)$. The former are well-known in the literature, while the latter can be computed uniquely (up to a $\rU (1)$ factor) via \cref{eq:B_i_def,eq:B_norm_1}.
	\end{enumerate}
\end{remark}

\medskip

\subsubsection*{The $\left( V, \rho \right) \simeq \left( V_3, \rho_3 \right) \oplus \left( V_1, \rho_1 \right)$ case:} We have the following Nahm data from \Cref{theorem:3+1Monopoles}\footnote{We use a slightly different (but gauge equivalent) version of the solutions found in \Cref{theorem:3+1Monopoles}.}:
\begin{align}
	T_1	&=	\begin{bmatrix}
				0 & 0 & 0 & g\\
				0 & 0 & f & 0\\
				0 & - f & 0 & 0\\
				g & 0 & 0 & 0
			\end{bmatrix}, \quad
	T_2 =	\begin{bmatrix}
				0 & 0 & f & 0\\
				0 & 0 & 0 & - g\\
				- f & 0 & 0 & 0\\
				0 & - g & 0 & 0
			\end{bmatrix}, \quad
	T_3 =	\begin{bmatrix}
				0 & f & 0 & 0\\
				- f & 0 & 0 & 0\\
				0 & 0 & 0 & g\\
				0 & 0 & g & 0
			\end{bmatrix}, \\
	f &= - \frac{t}{t^2 - 1}, \quad g = \frac{i}{t^2-1}.
\end{align}
After the Nahm transform we get a Higgs field of the form
\begin{equation}
	\Phi (r) = i \cdot \diag (F (r), G (r), - F (r), - G (r)),
\end{equation}
where the functions $F$ and $G$ can be found in \cref{eq:FG_3+1} in \Cref{app:3+1}. The asymptotic expansion of this Higgs field is
\begin{equation}
	\Phi (r) = i \cdot \diag (1, 1, - 1, - 1) - \frac{i}{2r} \diag (2, 2, - 2, - 2) + O \left( r^{-2} \right).
\end{equation}
In more geometric terms, this monopole (topologically) decomposes the trivial $\SU (4)$ bundle over the ``sphere at infinity'' into two, nontrivial $\rU (2)$ bundles, both of which are further decomposed (holomorphically) into line bundles with Chern numbers $\pm 2$.

\Cref{fig:3+1} shows the pointwise norm squared of this Higgs field and energy density (as a function of $r$) of this $\SU (4)$ monopole. Note that $|\Phi|^2$ has a single zero at the origin.

\begin{figure}[H]
	\centering
	\includegraphics[width=\textwidth]{./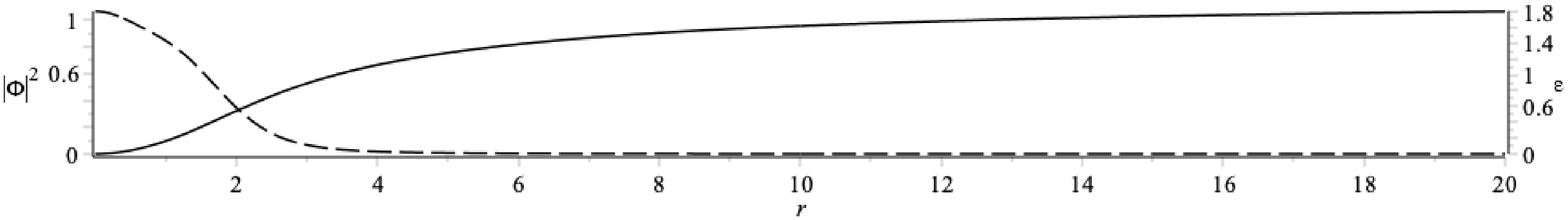}
	\caption{Norm squared of the Higgs field (solid) and energy density (dashed) of the spherically symmetric monopole corresponding to the solution to Nahm's equations found in \Cref{theorem:3+1Monopoles}. \\
	The representation given by the generators of the spherical symmetry satisfies $\left( V, \rho \right) \simeq \left( V_3, \rho_3 \right) \oplus \left( V_1, \rho_1 \right)$.}
	\label{fig:3+1}
\end{figure}

\smallskip

\subsubsection*{The $\left( V, \rho \right) \simeq \left( V_4, \rho_4 \right) \oplus \left( V_2, \rho_2 \right)$ case:} We the following Nahm data from \Cref{theorem:(n+2)+nMonopoles}:
\begin{align}
	T_1	&=	\begin{bmatrix}
				\frac{3i}{2}f^+ & 0 & 0 & 0 & 0 & 0\\
				0 & \frac{i}{2}f^+ & 0 & 0 & \sqrt{\frac{2}{3}}g & 0\\
				0 & 0 & -\frac{i}{2}f^+ & 0 & 0 & \sqrt{\frac{2}{3}}g\\
				0 & 0 & 0 & -\frac{3i}{2}f^+ & 0 & 0\\
				0 & -\sqrt{\frac{2}{3}}g & 0 & 0 & \frac{i}{2}f^- & 0\\
				0 & 0 & -\sqrt{\frac{2}{3}}g & 0 & 0 & -\frac{i}{2}f^-
			\end{bmatrix}, \\
	T_2	&=	\begin{bmatrix}
				0 & \frac{\sqrt{3}}{2}f^+ & 0 & 0 & \frac{i}{\sqrt{2}}g & 0\\
				-\frac{\sqrt{3}}{2}f^+ & 0 & f^+ & 0 & 0 & \frac{i}{\sqrt{6}}g\\
				0 & -f^+ & 0 & \frac{\sqrt{3}}{2}f^+ & \frac{i}{\sqrt{6}}g & 0\\
				0 & 0 & -\frac{\sqrt{3}}{2}f^+ & 0 & 0 & \frac{i}{\sqrt{2}}g\\
				\frac{i}{\sqrt{2}}g & 0 & \frac{i}{\sqrt{6}}g & 0 & 0 & \frac{1}{2}f^-\\
				0 & \frac{i}{\sqrt{6}}g & 0 & \frac{i}{\sqrt{2}}g & -\frac{1}{2}f^- & 0
			\end{bmatrix}, \\
	T_3	&=	\begin{bmatrix}
				0 & \frac{i\sqrt{3}}{2}f^+ & 0 & 0 & -\frac{1}{\sqrt{2}}g & 0\\
				\frac{i\sqrt{3}}{2}f^+ & 0 & if^+ & 0 & 0 & -\frac{1}{\sqrt{6}}g\\
				0 & if^+ & 0 & \frac{i\sqrt{3}}{2}f^+ & \frac{1}{\sqrt{6}}g & 0\\
				0 & 0 & \frac{i\sqrt{3}}{2}f^+ & 0 & 0 & \frac{1}{\sqrt{2}}g\\
				\frac{1}{\sqrt{2}}g & 0 & -\frac{1}{\sqrt{6}}g & 0 & 0 & \frac{i}{2}f^-\\
				0 & \frac{1}{\sqrt{6}}g & 0 & -\frac{1}{\sqrt{2}}g & \frac{i}{2}f^- & 0
			\end{bmatrix}, \\
	f^+	&= - \frac{3t+1}{3(t^2-1)}, \quad
	f^-	= - \frac{3t+5}{3(t^2-1)}, \quad
	g	= \frac{2}{\sqrt{3}(t^2 - 1)}.
\end{align}
After the Nahm transform we get a Higgs field of the form
\begin{equation}
	\Phi_0 (r) = i \cdot \diag (F (r), G (r), 1 - (F (r) + F (- r) + G (r) + G (- r)), F (- r), G (- r)),
\end{equation}
where the function $F$ and $G$ can be found in \cref{eq:FG_4+2} in \Cref{app:4+2}. While this Higgs field is not traceless, and hence structure group of the monopole is only $\rU (5)$, the traceless part of $\Phi_0$, call $\Phi$, together with the same connection forms an $\SU (5)$ monopole. The asymptotic expansion of this $\SU (5)$ Higgs field is
\begin{equation}
	\Phi (r) = i \cdot \diag \left( \frac{4}{5}, \frac{4}{5}, \frac{4}{5}, - \frac{6}{5}, - \frac{6}{5} \right) - \frac{i}{2r} \cdot \diag (2, 2, 2, - 3, - 3) + O \left( r^{-2} \right).
\end{equation}
In more geometric terms, this monopole (topologically) decomposes the trivial $\SU (5)$ bundle over the ``sphere at infinity'' into a direct sum of a $\rU (3)$ and a $\rU (2)$ bundle, both of which are further decomposed (holomorphically) into line bundles with Chern numbers 2 and $- 3$, respectively.

\Cref{fig:4+2} shows the pointwise norm squared of this Higgs field and energy density (as a function of $r$) of this $\SU (5)$ monopole. Note that $|\Phi|^2$ has a single zero at the origin.

\begin{figure}[H]
	\centering
	\includegraphics[width=\textwidth]{./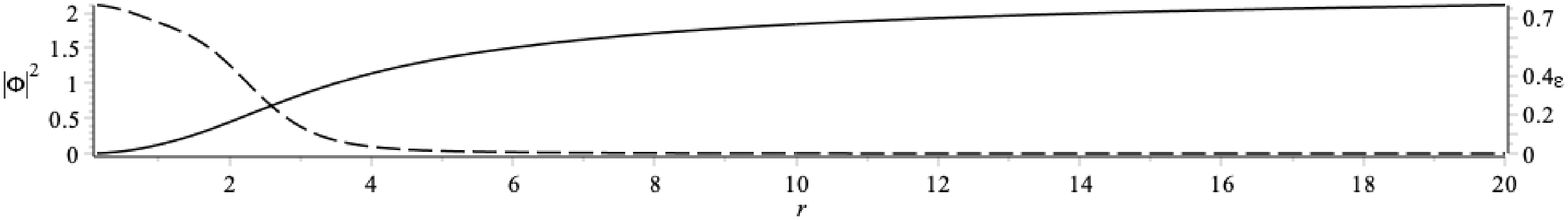}
	\caption{Norm squared of the Higgs field (solid) and energy density (dashed) of the spherically symmetric monopole corresponding to the solution to Nahm's equations found in \Cref{theorem:(n+2)+nMonopoles}. \\
	The representation given by the generators of the spherical symmetry satisfies $\left( V, \rho \right) \simeq \left( V_4, \rho_4 \right) \oplus \left( V_2, \rho_2 \right)$.}
	\label{fig:4+2}
\end{figure}

\bigskip

\begin{acknowledgment}
	BC acknowledges the support of the Natural Sciences and Engineering Research Council of Canada (NSERC), RGPIN-2019-04375.

	Portion of this work was done while CJL received NSERC USRA support. CJL was supported by NSERC CGS-D.

	AD, \'AN, and HY thank the organizers of DOmath 2019 at Duke University, where a large portion of this work was completed. In particular, they thank Matthew Beckett for his help during the program.
\end{acknowledgment}

\bigskip

\begin{data}
	The data that support the findings of this study are openly available on \href{https://github.com}{github.com}, at \cite{Maplecode}.
\end{data}

\bigskip

\appendix
\section{The coefficients of the Higgs fields}

\subsection{The $\left( V, \rho \right) \simeq \left( V_3, \rho_3 \right) \oplus \left( V_1, \rho_1 \right)$ case:}
\label{app:3+1}

\begin{equation}
\begin{aligned}\label{eq:FG_3+1}
	F (r)	&= \frac{e^{4 r} p_1 (r) - p_1 (- r)}{e^{4 r} p_2 (r) + p_2 (- r)}, \\
	G (r)	&= \frac{e^{12 r} p_3 (r) + e^{8 r} p_4 (r) - e^{4 r} p_4 (- r) - p_3 (- r)}{- e^{12 r} p_2 (r) + e^{8 r} p_5 (r) + e^{4 r} p_5 (- r) - p_2 (- r)}. \\
	p_1 (r)	&= 4r^2-6r+3, \\
	p_2 (r)	&= 4r^2-2r, \\
	p_3 (r)	&= 4r^2-6r+1, \\
	p_4 (r)	&= 64r^4-32r^3+36r^2+6r-3, \\
	p_5 (r)	&= 64r^4-32r^3+4r^2-6r.
\end{aligned}
\end{equation}

\smallskip

\subsection{The $\left( V, \rho \right) \simeq \left( V_4, \rho_4 \right) \oplus \left( V_2, \rho_2 \right)$ case:}
\label{app:4+2}

\begin{equation}
\begin{aligned}\label{eq:FG_4+2}
	F (r)		&= \frac{e^{4r}p_1(r)+p_2(r)}{e^{4r}p_3(r)-p_4(r)}, \\
	G (r)		&= \frac{e^{12r}p_5(r)-e^{8r}p_6(r)-e^{4r}p_7(r)+p_8(r)}{e^{12r}p_9(r)+e^{8r}p_{10}(r)-e^{4r}p_{11}(r)-p_4(r)}. \\
	p_1 (r)		&= 8r^3-16r^2+15r-6, \\
	p_2 (r)		&= 4r^2+9r+6, \\
	p_3 (r)		&= 8r^3-8r^2+3r, \\
	p_4 (r)		&= 4r^2+3r, \\
	p_5 (r)		&= 64r^5-192r^4+224r^3-112r^2+27r-3, \\
	p_6 (r)		&= 256r^6-64r^5+256r^4-80r^3-84r^2+45r-9, \\
	p_7 (r)		&= 128r^5+128r^4+112r^3-24r^2-9r+9, \\
	p_8 (r)		&= 4r^2+9r+3, \\
	p_9 (r)		&= 64r^5-128r^4+96r^3-32r^2+3r, \\
	p_{10} (r)	&= 256r^6-192r^5+64r^4-48r^3+60r^2-9r, \\
	p_{11} (r)	&= 128r^5+128r^4+48r^3+24r^2-9r.
\end{aligned}
\end{equation}

\bibliography{Bib-monopoles,references}
\bibliographystyle{abstract}

\end{document}